\documentclass{article}
\usepackage{graphicx} 
\usepackage{amsfonts,amsmath,amsthm}
\usepackage{enumerate}
\usepackage{comment}
\usepackage{amssymb}

\usepackage{xcolor}
\definecolor{darkred}{rgb}{0.45,0,0}
\definecolor{darkblue}{rgb}{0.2, 0.2, 0.9}
\definecolor{darkorange}{rgb}{0.8, 0.4, 0.1}

\definecolor{shadecolor}{rgb}{1,0.8,0.3}
\definecolor{myurlcolor}{rgb}{0.5,0,0}
\definecolor{mycitecolor}{rgb}{0,0,0.8}
\definecolor{myrefcolor}{rgb}{0,0,0.8}
\definecolor{hyperrefcolor}{rgb}{0.5,0,0}

\usepackage[
    colorlinks, 
    citecolor=blue, 
    urlcolor=darkred, 
    final, 
    hyperindex, 
    pagebackref, 
    linkcolor = darkblue
]{hyperref}

\newcommand{\define}[1]{{\bf \boldmath{#1}}\index{#1}}

\theoremstyle{plain}

\newtheorem{thm}{Theorem}

\newtheorem{lem}[thm]{Lemma}

\theoremstyle{remark}

\theoremstyle{definition}

\newcommand{\maps}{\colon}    
\newcommand{\tr}{\operatorname{tr}}
\newcommand{\spann}{\operatorname{span}} 

\newcommand{\R}{{\mathbb R}}  
\newcommand{\C}{{\mathbb C}}  
\newcommand{\Z}{{\mathbb Z}}  
\renewcommand{\H}{{\mathbb H}}  
\renewcommand{\O}{{\mathbb O}}  
\newcommand{\K}{{\mathbb K}}   
\renewcommand{\L}{{\mathbb L}}   

\newcommand{\Aut}{{\rm Aut}}  
\newcommand{\U}{{\rm U}}    
\newcommand{\SU}{{\rm SU}}    
\newcommand{\Spin}{{\rm Spin}}    
\newcommand{\F}{{\rm F}}       
\newcommand{\Stab}{\mathrm{Stab}} 
\renewcommand{\S}{\mathrm{S}} 

\newcommand{\M}{\mathrm{M}}   
\newcommand{\h}{\mathfrak{h}}  

\title{The Standard Model Gauge Group \\
from the Exceptional Jordan Algebra }
\author{John C. Baez\footnote{School of Mathematics, University of Edinburgh, James Clerk Maxwell Building, Peter Guthrie Tait Road, Edinburgh, UK EH9 3FD.} \and
Paul Schwahn\footnote{IMECC, Universidade Estadual de Campinas, Rua S\'ergio Buarque de Holanda 651,
13083-859 Campinas-SP, Brazil}}

\date{\today}

\begin{document}

\maketitle

\begin{abstract}
We construct the Standard Model gauge group using the exceptional Jordan algebra $\h_3(\O)$ and its automorphism group $\F_4$.  The group $\F_4$ acts on pairs of Jordan subalgebras $X \subset B \subset \h_3(\O)$ with $X \cong \h_2(\C)$ and $B \cong \h_3(\C)$, and for any such pair the stabilizer of $X$ intersected with the identity component of the stabilizer of $B$ is isomorphic to the Standard Model gauge group.   Since $\h_2(\C)$ is the Jordan algebra of observables of a qubit and $\h_3(\C)$ is the Jordan algebra of observables of a qutrit, we could say $\h_3(\O)$ is the Jordan algebra of observables of an `octonionic qutrit'.   In this language our result says roughly that the Standard Model gauge group is the group of symmetries of an octonionic qutrit that restrict to act as unitary operators on an ordinary qutrit and, within that, a qubit. 
\end{abstract}

\section{Introduction}

There have been many attempts to relate the Standard Model of particle physics to the octonions $\O$ and the exceptional Jordan algebra $\h_3(\O)$, which consists of self-adjoint $3 \times 3$ matrices of octonions.  For a sample, see \cite{Boyle,Dixon,DV,Furey,GurseyTze,Krasnov} and the many references therein.  A particularly crisp mathematical spinoff of this work is Todorov and Dubois-Violette's \cite{TDV} characterization of the Standard Model gauge group as the intersection of two maximal connected subgroups of $\F_4$, which is the automorphism group of $\h_3(\O)$.  However, this raises the question of how to best understand these maximal connected subgroups.  In attempting to answer this question, we sought to describe them in terms of Jordan subalgebras of $\h_3(\O)$, which led to the two theorems here.

While the gauge group of the Standard Model of particle physics is is often said to be $\U(1) \times \SU(2) \times \SU(3)$, in fact a certain $\Z_6$ subgroup of the center acts trivially on all known particles, so we may alternatively take the gauge group to be the quotient $(\U(1) \times \SU(2) \times \SU(3))/Z_6$, which is isomorphic to
\[  \begin{array}{ccl} 
 \S(\U(2) \times \U(3)) := &
\Big\{ x \in \SU(5) : x = 
\left( 
\begin{array}{c c c c c}
\ast & \ast & 0 & 0 & 0 \\
\ast & \ast & 0 & 0 & 0 \\
0 & 0 & \ast & \ast & \ast \\
0 & 0 & \ast & \ast & \ast \\
0 & 0 & \ast & \ast & \ast 
\end{array}
\right) \; \Big\}.  
\end{array}
\]
This fact is crucial for the $\SU(5)$ and $\Spin(10)$ grand unified theories \cite{Georgi,GeorgiGlashow}; an exposition for mathematicians can be found in \cite{BH}.

Our main result is this characterization of $\S(\U(2) \times \U(3))$ as a subgroup of $\F_4$:

\begin{thm}
\label{thm:1}
Suppose $X,B$ are Jordan subalgebras of $\h_3(\O)$ such that
\[  X \cong \h_2(\C), \;\; B \cong \h_3(\C), \;\; X \subset B . \]
Then 
\[     \Stab(X) \cap \Stab(B)_0 \cong \S(\U(2) \times \U(3)). \]
\end{thm}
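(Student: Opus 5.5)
We will first reduce to a standard pair $X \subset B$ and then compute the two stabilizers explicitly. Choose a unit imaginary octonion $i$ and let $\C = \R[i] \subset \O$. The standard copies are $B_0 = \h_3(\C) \subset \h_3(\O)$ and $X_0 = \h_2(\C) \oplus 0 \subset B_0$, i.e.\ the matrices supported in the upper left $2\times 2$ block.

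\textbf{Step 1 (transitivity).} We will show that $\F_4$ acts transitively on pairs $(X,B)$ of the given type, so that it suffices to treat $(X_0,B_0)$. Any subalgebra $B \cong \h_3(\C)$ contains three orthogonal primitive idempotents summing to $1$. Such a frame is also a frame of $\h_3(\O)$, since primitivity is preserved: the Peirce space of each idempotent in $B$ is already $1$-dimensional over $\R$, and the rank in $\h_3(\O)$ is at most $3$. The frame can be chosen so that the first two idempotents span the identity of $X$. Since $\F_4$ acts transitively on frames, we may assume the frame is diagonal, $E_1,E_2,E_3$. Then $B$ is determined by its off-diagonal Peirce pieces, which are complex lines in $\O$ in the three off-diagonal slots, subject to the multiplicative compatibility of $\h_3(\C)$. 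The stabilizer of the frame is $\Spin(8)$, acting by triality on the three off-diagonal copies of $\O$. Using this, we move the three lines to $\C$ in each slot, and this fixes both $B$ and $X = B \cap (\text{upper }2\times2\text{ block})$. Here $X$ is pinned down as the Peirce-$1$ space of $E_1+E_2$ inside $B$.

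\textbf{Step 2 (stabilizer of $B_0$).} It is classical that $\Stab(B_0)$ is the subgroup of $\F_4$ preserving $\h_3(\C)$. This subgroup contains the image of $\SU(3)$ acting by $a \mapsto g a g^{*}$ on $B_0$, extended to all of $\h_3(\O)$ via the decomposition $\h_3(\O) = \h_3(\C) \oplus \C^3\otimes \C^3$, where $\O = \C \oplus \C^3$ as a $\C$-module. It also contains the $\SU(3) = \Stab_{G_2}(i)$ acting on the $\C^3$ factor. We expect $\Stab(B_0)_0 \cong (\SU(3)\times\SU(3))/\Z_3$, and that the full stabilizer additionally contains the component coming from complex conjugation. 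We will verify the identity component by a dimension count on Lie algebras: $\dim \mathfrak{f}_4 = 52$, and the derivations killing $B_0$ form $\mathfrak{su}(3)$. Together with the derivations induced by $\mathfrak{su}(3)$ acting on $B_0$, this gives total dimension $16$.

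\textbf{Step 3 (stabilizer of $X_0$ within it).} An element $(g,h) \in \SU(3)\times \SU(3)$ preserves $X_0$ exactly when $g a g^*$ preserves the upper $2\times2$ block. This holds iff $g$ preserves $\C^2 \oplus 0$, i.e.\ iff $g \in \S(\U(2)\times\U(1)) \cong \U(2)$. Here $g$ need only act on $X_0$, so the requirement is only that $g$ stabilize the block, not fix it pointwise. The other factor $h$ acts trivially on $B_0$, so it is unrestricted. The intersection is therefore $(\U(2)\times\SU(3))/\Z_3$. We will then exhibit the isomorphism $(\U(2)\times \SU(3))/\Z_3 \cong \S(\U(2)\times\U(3))$. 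Send $(u,h)$ to $\mathrm{diag}(u,\ \det(u)^{-1/3} h)$. The ambiguity of the cube root is absorbed by the $\Z_3$, which acts on both factors through central cube roots of unity. We will then check that the kernel is exactly the $\Z_3$ identified in Step 2.

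\textbf{Main obstacle.} We expect the hardest points to be Step 2 and the precise bookkeeping of the center. First, we must confirm that $\Stab(B_0)_0$ is exactly $(\SU(3)\times\SU(3))/\Z_3$ and not larger; we will use the dimension count plus connectedness. Second, we must identify which diagonal $\Z_3$ is quotiented out: the center acts on $\C^3\otimes\C^3$ via $\omega\cdot\omega^{-1}$ or $\omega\cdot\omega$, depending on conventions. Getting this right is exactly what determines whether the final group is $\S(\U(2)\times\U(3))$ rather than a different quotient. We will check it by computing how the centers act on the $27$-dimensional representation $\mathbf{1}\oplus\mathbf{8}\oplus(\mathbf 3\otimes\bar{\mathbf 3}\text{-type})$ of $\h_3(\O)$.
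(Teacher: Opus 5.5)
Your route differs from the paper's and is viable. You never pass through $A=E_1(\ell)\cong\h_2(\O)$, Lemma~\ref{lem:uniqueness}, or Theorem~\ref{thm:2}. Instead you get transitivity on pairs $(X,B)$ directly from a frame of $B$ with $E_1+E_2$ equal to the unit of $X$: any element of the frame stabilizer $\Spin(8)$ carrying $B$ to $B_0$ then automatically carries $X=E_1(E_1+E_2)\cap B$ to $X_0$. You also replace the Borel--de Siebenthal maximality argument by the bound $\dim\Stab(B_0)\le 8+8$ from restriction to $B_0$. That bound is fine once you check that derivations vanishing on $B_0$ form only an $\mathfrak{su}(3)$: they kill the frame, so they act on the three off-diagonal slots by an infinitesimal triality triple, and killing $1$ and $i$ in every slot forces a single derivation of $\O$ fixing $i$. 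Two steps, however, do not work as written.

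\textbf{The final isomorphism is wrong.} The formula $(u,h)\mapsto\mathrm{diag}(u,\det(u)^{-1/3}h)$ does not define a map, and the cube-root ambiguity is \emph{not} absorbed by the $\Z_3$. That $\Z_3$ sends $(u,h)$ to $(\omega u,\omega h)$, which changes the upper block, while changing the cube root only changes the lower block. In fact no formula of the shape $\mathrm{diag}(u,\chi(u)h)$ can work, because $\U(2)$ has no continuous character $\chi$ with $\chi^3=\det^{-1}$. You have to twist the upper block. Write $g=\mathrm{diag}(u,c)\in\S(\U(2)\times\U(1))$ with $c=\det(u)^{-1}$. Then $(g,h)\mapsto\mathrm{diag}(c^2u,\;c^{-1}h)$ is a homomorphism into $\S(\U(2)\times\U(3))$, since the determinant is $c^3\cdot c^{-3}=1$. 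It is onto: given $\mathrm{diag}(a,b)$, pick $c$ with $c^3=\det a$ and set $u=c^{-2}a$, $h=cb$. Its kernel is exactly $\{(\lambda I_3,\lambda I_3):\lambda^3=1\}$. Relatedly, the ``obstacle'' you single out is a red herring. The kernels $\{(\omega^k,\omega^k)\}$ and $\{(\omega^k,\omega^{-k})\}$ give isomorphic quotients (complex-conjugate the second $\SU(3)$). What matters is only that the kernel is a diagonal $\Z_3$ rather than trivial, since $\U(2)\times\SU(3)$ has disconnected center and so is not isomorphic to $\S(\U(2)\times\U(3))$.

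\textbf{Step 1 is asserted, not proved.} The sentence ``using triality we move the three lines to $\C$ in each slot'' is the entire nontrivial content of Lemma~\ref{lem:transitivity_for_subalgebras}, and it cannot be done slot by slot. Three independent $2$-planes in the three $8$-dimensional representations have $3\cdot 12=36$ parameters, more than $\dim\Spin(8)=28$, so you must use the multiplicative compatibility explicitly:
\begin{enumerate}[(i)]
\item Move $V_{12}$ to $\C$ using the vector representation, which is transitive on $2$-planes.
\item The relations $V_{12}V_{23}\subseteq V_{31}^*$ and their cyclic versions, together with alternativity of $\O$, then force $V_{23}=\C a$ and $V_{31}=a^*\C$ for a unit octonion $a$.
\item Find an element of $\Spin(8)$ that preserves $\C$ in the vector representation and moves $a$ to $1$ in the spinor representation. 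It exists because the $\Spin(6)\cong\SU(4)$ fixing that plane is transitive on the unit sphere of $\O\cong\C^4$.
\end{enumerate}
All elements used lie in $\Spin(8)$ and fix $E_1,E_2,E_3$, so your observation that $X$ is carried to $X_0$ then completes the reduction.
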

\noindent Here $\h_n(\C)$ is the Jordan algebra of $n \times n$ self-adjoint complex matrices.  $\Stab(X)$ is the stabilizer of $X$---that is, the subgroup of $\F_4$ consisting of elements that map $X$ to itself---while $\Stab(B)_0$ is the identity component of the stabilizer of $B$.   

Since $\h_2(\C)$ is the Jordan algebra of observables of a qubit and $\h_3(\C)$ is the Jordan algebra of observables of a qutrit, we could say $\h_3(\O)$ is the Jordan algebra of observables of an `octonionic qutrit'.   In this language, Theorem \ref{thm:1} says roughly that the Standard Model gauge group is the group of symmetries of an octonionic qutrit that restrict to act as unitary operators on an ordinary qutrit and restrict further to act on an ordinary qubit. 

Theorem \ref{thm:1} emerged from another characterization of the Standard Model gauge group, which grew out of the work of Todorov and Dubois-Violette \cite{TDV}:

\begin{thm}
\label{thm:2}
Suppose $A,B$ are Jordan subalgebras of $\h_3(\O)$ such that
\[  A \cong \h_2(\O), \;\; B \cong \h_3(\C), \;\; A \cap B \cong \h_2(\C) . \]
Then 
\[     \Stab(A) \cap \Stab(B)_0 \cong \S(\U(2) \times \U(3)) .\]
\end{thm}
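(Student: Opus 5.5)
The plan is to reduce everything to a standard configuration and then compute two well-known stabilizers inside $\F_4$.

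\textbf{Step 1: transitivity.} I would first show that $\F_4$ acts transitively on the pairs $(A,B)$ described in the statement, so that it suffices to treat one pair.
\begin{itemize}
\item Any subalgebra $A \cong \h_2(\O)$ is $e^\perp$-type, namely $\{x : (1-e) \circ x = x\}$ for a primitive idempotent $e$. Since $\F_4$ is transitive on primitive idempotents, we may take $A$ to be the upper left $2\times 2$ block.
\item Any $B \cong \h_3(\C)$ has the form $\h_3(\C_u)$, where $\C_u = \R + \R u$ for a unit imaginary octonion $u$, after conjugating so that $B$ contains the diagonal. I would get this by using a frame of three orthogonal primitive idempotents in $B$.
\item The condition $A \cap B \cong \h_2(\C)$ forces the idempotent $e$ to lie in $B$, and to be primitive in $B$. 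I would then use transitivity of $\F_4$ on frames, together with the $\Spin(8)$ stabilizer of a frame acting transitively on the required data, to bring the pair to $A = \h_2(\O) \oplus 0$ and $B = \h_3(\C)$ with $\C \subset \O$ fixed.
\end{itemize}

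\textbf{Step 2: the stabilizer of $A$.} We have $\Stab(A) = \Stab(e) \cong \Spin(9)$, acting on $\h_2(\O)$ through $\mathrm{SO}(9)$ and on the off-diagonal $\O^2$ by the spinor representation.

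\textbf{Step 3: the stabilizer of $B$.} Here $\Stab(B)_0$ is the subgroup $(\SU(3)\times\SU(3))/\Z_3$ of $\F_4$. To see this, write
\[
\h_3(\O) = \h_3(\C) \oplus \M_3(\C)\text{-part},
\]
using the splitting $\O = \C \oplus \C^3$. The first factor acts on $B$ by conjugation $x \mapsto gxg^*$. The second is $\SU(3) \subset G_2$, the stabilizer of $u$, which fixes $B$ pointwise.

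\textbf{Step 4: the intersection.} An element $(g,h)$ of $\Stab(B)_0$ preserves $e = \mathrm{diag}(0,0,1)$ if and only if $g$ preserves the line $\C e_3$, that is, $g \in \S(\U(2)\times\U(1))$.

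\emph{Main obstacle.} The hard part is checking that preserving $e$ does not constrain $h$ at all. I would argue this from the fact that the second factor fixes $B$ pointwise, and $e \in B$. This gives
\[
\Stab(A) \cap \Stab(B)_0 \cong \bigl(\U(2) \times \SU(3)\bigr)/\Z_3 .
\]
I also have to track the $\Z_3$ quotient precisely. The point is that the center of the first $\SU(3)$, $\omega I$, is identified with $\omega$ in the center of the second factor. This is the step I expect to require the most care.

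\textbf{Step 5: identification.} Finally I would identify $(\U(2)\times\SU(3))/\Z_3$ with $\S(\U(2)\times\U(3))$. I would do this with an explicit map sending $(a,h)$ to $\mathrm{diag}(a,\ \lambda h)$, where $\lambda^3 = \det a^{-1}$. Checking that this is a well-defined isomorphism is again the same $\Z_3$ bookkeeping as in Step 4.
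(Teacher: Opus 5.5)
Your overall plan matches the paper's: reduce to the standard pair by $\F_4$-transitivity, then compute the intersection as $(\S(\U(2)\times\U(1))\times\SU(3))/\Z_3$. Your order of reduction also works. You fix $A$ first, observe $e\in B$, and take a frame of $B$ through $e$ sent to the standard frame with $e\mapsto e_3$. The paper instead fixes $B$ first, moves $A\cap B$ with the first $\SU(3)$, and then uses that a copy of $\h_2(\C)$ lies in a unique copy of $\h_2(\O)$. To justify $e\in B$, note that the unit of $A\cap B$ is a trace-$2$ idempotent in $E_1(1-e)$, hence equals $1-e$, while the unit of $B$ is $1$.

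The real content of the transitivity step is missing, however, and your second bullet is false as stated. Putting a frame on the diagonal only gives
\[
B=\spann_\R\{e_1,e_2,e_3\}\oplus\xi_{12}(V_{12})\oplus\xi_{23}(V_{23})\oplus\xi_{31}(V_{31})
\]
with real $2$-planes $V_{ij}\subset\O$, and these need not be one common $\C_u$. Take an element of the frame stabilizer $\Spin(8)$ that preserves $\C$ in the vector representation but sends $1$ to a unit $a\notin\C$ in the left spinor representation. It carries $\h_3(\C)$ to a subalgebra with $V_{12}=\C$, $V_{23}=\C a$, $V_{31}=a^*\C$. This contains the diagonal yet is not $\h_3(\C_u)$ for any $u$, since $1\notin V_{23}$.

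The ``required data'' is exactly this triple of planes, and transitivity on it needs triality:
\begin{enumerate}
\item Move $V_{12}$ to $\C$ using the vector representation.
\item Use the $\Spin(6)\cong\SU(4)$ inside the stabilizer of $\C$. It acts transitively on the unit sphere of the spinor representation $\O\cong\C^4$, so it sends a unit vector of $V_{23}$ to $1$.
\item The relations $V_{12}V_{23}\subseteq V_{31}^*$ etc., coming from $\xi_{ij}(x)\circ\xi_{jk}(y)=\xi_{ki}(y^*x^*)$, then force $B=\h_3(\C)$.
\end{enumerate}
Since $\Spin(8)$ fixes $e_3$, your $A$ is untouched, so with this argument supplied your route goes through.

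In Step 5, the map $(a,h)\mapsto\mathrm{diag}(a,\lambda h)$ with $\lambda^3=\det a^{-1}$ is not well defined, and no choice of $\lambda$ rescues it. A homomorphism of this shape needs $\lambda$ to be a character $\det^n$ of $\U(2)$, but $\det^{3n}\neq\det^{-1}$ for every integer $n$. It also cannot kill your $\Z_3$: the generator $(\omega I_2,\omega I_3)$ goes to a matrix with upper block $\omega I_2\neq I_2$. The $\U(2)$ block must be twisted as well. The map $(a,h)\mapsto\mathrm{diag}\bigl(\det(a)^{-2}a,\ \det(a)\,h\bigr)$ is a homomorphism into $\S(\U(2)\times\U(3))$ with kernel exactly $\{(cI_2,cI_3): c^3=1\}$. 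It is onto: given $\mathrm{diag}(b,k)$, pick $d$ with $d^3=\det b^{-1}$ and take $a=d^2b$, $h=d^{-1}k$.

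Two smaller points. First, Step 3 only exhibits $(\SU(3)\times\SU(3))/\Z_3$ inside $\Stab(B)$; you also need that it exhausts $\Stab(B)_0$. The paper gets this from its maximality among connected subgroups of $\F_4$ via Borel--de Siebenthal. A dimension count also works: the kernel of restriction to $B$ is the $\SU(3)\subset\mathrm{G}_2$ fixing $\C$ pointwise, and the image lies in the $8$-dimensional $\Aut(\h_3(\C))$. Second, the step you single out as the main obstacle, that $h$ is unconstrained, is immediate, since $h$ fixes $B\ni e$ pointwise.
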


Todorov and Dubois--Violette proved this for a certain standard choice of subalgebras $A$ and $B$.  Thus, the challenge in proving Theorem \ref{thm:2} was to show that every other choice can be mapped to this standard choice using the action of $\F_4$.  This shows that the theorem is not an artifact of a specific choice, but rather a general fact.

We begin in Section \ref{sec:octonions} by constructing the octonion product from $\SU(3)$-invariant operations on $\C$ and $\C^3$.  In Section \ref{sec:standard} we use this to reprove Todorov and Dubois--Violette's special case of Theorem \ref{thm:2}.  We also explain what would happen if we tried to use $\Stab(B)$ rather than $\Stab(B)_0$.  In Section \ref{sec:transitivity} we show that $\F_4$ acts transitively on the set of subalgebras of $\h_3(\O)$ that are isomorphic to $\h_3(\C)$.  We start Section \ref{sec:proofs} with some lemmas leading up to a proof that every Jordan subalgebra of $\h_3(\O)$ isomorphic to $\h_2(\C)$ is contained in a unique Jordan subalgebra isomorphic to $\h_2(\O)$.   This lets us prove that $\F_4$ acts transitively on the set of pairs of Jordan subalgebra $A, B \subset \h_3(\O)$ with $A \cong \h_2(\O)$, $B \cong \h_3(\C)$ and $A \cap B \cong \h_3(\C)$.  Theorem \ref{thm:2} then follows from Todorov and Dubois-Violette's special case.  We conclude by using these results to prove Theorem \ref{thm:1}.  

\section{The octonions from complex vectors}
\label{sec:octonions}

To relate $\h_3(\O)$ to the more familiar Jordan algebra $\h_3(\C)$, it is useful to have a construction of the octonions starting from the complex numbers.  In parallel to the usual construction of the quaternions as $\H =  \R \oplus \R^3$
with multiplication given by 
\[ (a, \mathbf{a})(b ,\mathbf{b}) = (a b - \mathbf{a} \cdot \mathbf{b}, \; a \mathbf{b} + b \mathbf{a} + \mathbf{a} \times \mathbf{b}), \]
we can build the octonions as $\O = \C \oplus \C^3$ with the product
\[ (a, \mathbf{a})(b, \mathbf{b}) = (a b - \langle \mathbf{a}, \mathbf{b}\rangle, \; \overline{a}\, \mathbf{b} + b \mathbf{a} + \mathbf{a} \; \overline{\!\times\!} \;\mathbf{b}) .\]
Here $\langle \mathbf{a}, \mathbf{b}\rangle$ is the usual inner product of vectors in $\C^3$, while the last term is the componentwise complex conjugate of the cross product $\mathbf{a} \times \mathbf{b}$.  

\begin{lem}
\label{lem:octonions_from_complex_vectors}
If we define multiplication on $\C \oplus \C^3$ by
\[  (a , \mathbf{a})(b , \mathbf{b}) = (a b -  \langle \mathbf{a}, \mathbf{b}\rangle, \; a \mathbf{b} + b \mathbf{a} + \mathbf{a} \; \overline{\!\times\!} \; \mathbf{b}) \]
then this space becomes a 8-dimensional normed division algebra over the real numbers, which is therefore isomorphic to the octonions.
\end{lem}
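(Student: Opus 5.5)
The plan is to prove that the multiplication is a \emph{composition algebra} structure and then quote Hurwitz's theorem. On $\C \oplus \C^3$ define the positive definite real quadratic form
\[ N(a,\mathbf{a}) = |a|^2 + \|\mathbf{a}\|^2 ,\]
where $\|\mathbf{a}\|^2 = \langle \mathbf{a},\mathbf{a}\rangle$. Three things need checking:
\begin{enumerate}[(i)]
\item the product is $\R$-bilinear;
\item $(1,0)$ is a two-sided unit;
\item $N(xy) = N(x)N(y)$ for all $x,y$.
\end{enumerate}
Items (i) and (ii) are immediate from the formula. Granting (iii), Hurwitz's theorem says that a finite-dimensional real unital algebra with a multiplicative positive definite quadratic form is one of $\R,\C,\H,\O$. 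Since the dimension is $8$, it is $\O$, and multiplicativity of the norm gives the absence of zero divisors.

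\textbf{A convention issue to settle first.} Read literally with $a\mathbf{b} + b\mathbf{a}$, the formula makes $(i,0)$ commute with every $(0,\mathbf{b})$, so the subalgebra $\C$ would be central. That is impossible in $\O$, whose center is $\R$. I will therefore work with the version in the preceding paragraph, whose vector part is $\overline{a}\,\mathbf{b} + b\mathbf{a} + \mathbf{a}\,\overline{\times}\,\mathbf{b}$. Which slot carries the conjugate must be matched with which slot of $\langle\cdot,\cdot\rangle$ is conjugate-linear; I take $\langle \mathbf{a},\mathbf{b}\rangle = \sum_i \overline{a_i}\, b_i$. Fixing these conventions consistently is part of the proof, and I regard it as the main obstacle. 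If a sign or conjugation fails in the expansion below, I will move the conjugate to the other slot, since this choice is exactly what makes the cross terms cancel.

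\textbf{The computation for (iii).} Write $c = \mathbf{a}\,\overline{\times}\,\mathbf{b}$. Then
\[ N(xy) = |ab - \langle \mathbf{a},\mathbf{b}\rangle|^2 + \|\overline{a}\mathbf{b} + b\mathbf{a} + c\|^2 .\]
I will expand this and sort the terms into four groups.
\begin{itemize}
\item \emph{Diagonal terms.} These are $|a|^2|b|^2 + |\langle\mathbf{a},\mathbf{b}\rangle|^2 + |a|^2\|\mathbf{b}\|^2 + |b|^2\|\mathbf{a}\|^2 + \|c\|^2$.
\item \emph{Lagrange's identity.} The Hermitian version reads $\|\mathbf{a}\times\mathbf{b}\|^2 = \|\mathbf{a}\|^2\|\mathbf{b}\|^2 - |\langle \mathbf{a},\mathbf{b}\rangle|^2$. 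It is the one nontrivial identity needed, and it follows from the $\epsilon\epsilon = \delta\delta - \delta\delta$ contraction. With it, the diagonal terms sum to exactly $(|a|^2 + \|\mathbf{a}\|^2)(|b|^2 + \|\mathbf{b}\|^2)$.
\item \emph{Cross terms with $c$.} These vanish. The quantities $\langle c,\mathbf{a}\rangle$ and $\langle c,\mathbf{b}\rangle$ reduce to $\sum_i (\mathbf{a}\times\mathbf{b})_i a_i = \det(\mathbf{a},\mathbf{b},\mathbf{a}) = 0$ and its analogue for $\mathbf{b}$. This is precisely why the cross product appears \emph{conjugated}: the unconjugated $\mathbf{a}\times\mathbf{b}$ is Hermitian-orthogonal to $\overline{\mathbf{a}}$, not to $\mathbf{a}$.
\item \emph{Remaining cross terms.} These are $-2\,\mathrm{Re}\big(ab\,\overline{\langle \mathbf{a},\mathbf{b}\rangle}\big)$ from the scalar part and $2\,\mathrm{Re}\,\langle \overline{a}\mathbf{b}, b\mathbf{a}\rangle$ from the vector part. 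They must cancel; with the chosen conventions both equal $\pm 2\,\mathrm{Re}\big(ab\,\langle \mathbf{b},\mathbf{a}\rangle\big)$ with opposite signs, and checking this is where the conjugation bookkeeping is decisive.
\end{itemize}

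Once (iii) holds, the conclusion follows from Hurwitz's theorem as described above.
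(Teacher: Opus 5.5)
Your proof is correct and follows essentially the paper's route: expand $N(xy)$, kill the cross terms with $\mathbf{a}\,\overline{\times}\,\mathbf{b}$ using $\mathbf{a}\cdot(\mathbf{a}\times\mathbf{b})=0$, cancel the $\pm 2\,\mathrm{Re}(ab\langle\mathbf{b},\mathbf{a}\rangle)$ terms, apply the complex Lagrange identity, and invoke the uniqueness of the $8$-dimensional normed division algebra. Reading the vector part as $\overline{a}\,\mathbf{b}+b\mathbf{a}$ is right, since the literal formula makes $\C$ central and fails norm-multiplicativity (e.g.\ $x=y=(i,e_1)$), and the paper's own cross term $2\,\mathrm{Re}(ab\langle\mathbf{b},\mathbf{a}\rangle)$ only arises from the conjugated version; your explicit check that $(1,0)$ is a unit also fills a step the paper leaves implicit.
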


\begin{proof}  It is enough to show that we have a normed division algebra, since the octonions are the only 8-dimensional normed division algebra over the reals.   We use the norm with
\[ \|(a ,\mathbf{a})\|^2 =  |a|^2 + \langle \mathbf{a} , \mathbf{a} \rangle \]
and show that
\[  \|(a, \mathbf{a})(b, \mathbf{b})\| = \|(a, \mathbf{a})\| \|(b , \mathbf{b})\| .\]

We start with
\[ \|(a, \mathbf{a})(b, \mathbf{b})\|^2 = \|(a b - \langle \mathbf{a}, \mathbf{b}\rangle , \; a \mathbf{b} + b \mathbf{a} + \mathbf{a} \; \overline{\!\times\!} \; \mathbf{b}) \|^2 \]
and use the definition of the norm to break this up into two terms:
\[ |a b - \langle \mathbf{a}, \mathbf{b} \rangle|^2 + \| a \mathbf{b} + b \mathbf{a}  + \mathbf{a} \; \overline{\!\times\!} \; \mathbf{b} \|^2 \]
We expand the first term:
\[  |a b - \langle \mathbf{a}, \mathbf{b} \rangle|^2 = |a b|^2 - 2 \mathrm{Re}(a b \langle \mathbf{b} , \mathbf{a} \rangle) + |\langle \mathbf{a}, \mathbf{b}\rangle|^2 \]
and expand the second:
\[ \|a \mathbf{b} + b \mathbf{a} + \mathbf{a} \; \overline{\!\times\!} \; \mathbf{b} \|^2 = \]
\[ \| a \mathbf{b} \|^2 + \|b \mathbf{a}\|^2 + \|\mathbf{a} \; \overline{\!\times\!} \; \mathbf{b}\|^2 + 
 2 \mathrm{Re} \big(a b \langle \mathbf{b}, \mathbf{a} \rangle  + 
\overline{a} \langle \mathbf{b}, \mathbf{a} \; \overline{\!\times\!} \; \mathbf{b} \rangle +
\overline{b} \langle \mathbf{a}, \mathbf{a} \; \overline{\!\times\!} \; \mathbf{b} \rangle  \big) \]
Note that
\[  \langle \mathbf{a}, \mathbf{a} \; \overline{\!\times\!} \; \mathbf{b} \rangle = \overline{\mathbf{a} \cdot (\mathbf{a} \times \mathbf{b})} = 0 \]
by a well-known vector identity which works for complex vectors just as for real ones.  Similarly $\langle \mathbf{b}, \mathbf{a} \; \overline{\!\times\!} \; \mathbf{b} \rangle = 0$.  Thus the expanded second term simplifies to
\[ \| a \mathbf{b} \|^2 + \|\mathbf{a} b\|^2 + \|\mathbf{a} \; \overline{\!\times\!} \; \mathbf{b}\|^2 + 
 2 \mathrm{Re} \big(a b \langle \mathbf{b}, \mathbf{a} \rangle \big) \]
and when we add it to the expanded first term we are left with this:
\[  |a b|^2 + |\langle \mathbf{a}, \mathbf{b}\rangle|^2 + 
 \| a \mathbf{b} \|^2 + \|\mathbf{a} b\|^2 + \|\mathbf{a} \; \overline{\!\times\!} \; \mathbf{b}\|^2 \]
We need to show that this equals
\[ 
 \|(a,\mathbf{a})\|^2 \; \|(b,\mathbf{b})\|^2 = |a|^2 |b|^2 + |a|^2 \|\mathbf{b}\|^2 + |b|^2 \|\mathbf{a}\|^2 + \|\mathbf{a}\|^2 \|\mathbf{b}\|^2 
\]
Three terms match, so it is enough to show
\[ |\langle \mathbf{a}, \mathbf{b}\rangle|^2 +  \|\mathbf{a} \; \overline{\!\times\!} \; \mathbf{b}\|^2 =  \|\mathbf{a}\|^2 \|\mathbf{b}\|^2 \]
This would be a familiar identity if we were working in $\R^3$ with the usual cross product instead of in $\C^3$ with $\; \overline{\!\times\!} \;$.   But $\mathbf{a} \; \overline{\!\times\!} \; \mathbf{b}$ is obtained from $\mathbf{a} \times \mathbf{b}$ by componentwise complex conjugation, so $\|\mathbf{a} \; \overline{\!\times\!} \; \mathbf{b}\|^2 =  \|\mathbf{a} \times \mathbf{b}\|^2$.  Thus, we just need to show
\[ |\langle \mathbf{a}, \mathbf{b}\rangle|^2 +  \|\mathbf{a} \times \mathbf{b}\|^2 =  \|\mathbf{a}\|^2 \|\mathbf{b}\|^2 \]
for vectors in $\C^3$.  Here one can simply write out both sides using components and check that they agree.   \end{proof}

The inner product and conjugated cross product $\; \overline{\! \times \!} \; $ are $\SU(3)$-equivariant operations on $\C^3$, so $\SU(3)$ acts as algebra automorphisms of $\O = \C \oplus \C^3$, and trivially on the subalgebra $\C \subset \O$.  Indeed it is well known that the group of automorphisms of $\O$ preserving a chosen square root of $-1$ is isomorphic to $\SU(3)$, and Yokota's proof of this fact \cite[Thm.\ 1.9.1]{Yokota} supplies the structures used in Lemma \ref{lem:octonions_from_complex_vectors}, though he does not state this result.

\section{The Standard Model gauge group}
\label{sec:standard}

Sitting inside $\h_3(\O)$, there are many choices of Jordan subalgebra $A \cong \h_3(\O)$, $B \cong \h_3(\C)$ such that $A \cap B \cong \h_2(\C)$.  For example, since
\[ \h_3(\O) \; = \;\; \left\{  \left( \begin{array}{ccc}  
                         \alpha  &  z  & y^*    \\  
                         z^*       & \beta & x      \\ 
                         y       & x^*   & \gamma   
\end{array} \right) : \; \alpha,\beta,\gamma \in \R, \; x,y,z \in \O \right\} \qquad \qquad \]
we can take
\begin{equation}
\label{eq:choice_of_AB}
\begin{array}{cclccc} 
 A &=& \left\{  \left( \begin{array}{ccc}  
                         \alpha  &  z  & 0    \\  
                         z^*       & \beta & 0      \\ 
                         0    & 0   & 0   
\end{array} \right) : \; \alpha,\beta \in \R, \; z \in \O \right\} &\cong& \h_2(\O) \\ \\
 B &=& \left\{  \left( \begin{array}{ccc}  
                         \alpha  &  z  & y^*    \\  
                         z^*       & \beta & x      \\ 
                         y       & x^*   & \gamma   
\end{array} \right) : \; \alpha,\beta,\gamma \in \R, \; x,y,z \in \C \right\} &\cong& \h_3(\C) 
\end{array}
\end{equation}
and then
\[  \begin{array}{cclccc}
  A \cap B &=& \left\{  \left( \begin{array}{ccc}  
                         \alpha  &  z  & 0    \\  
                         z^*       & \beta & 0      \\ 
                         0    & 0   & 0   
\end{array} \right) : \; \alpha,\beta \in \R, \; z \in \C \right\} &\cong& \h_2(\C).
\end{array}
\]
Dubois-Violette and Todorov showed that in this example
\[     \Stab(A) \cap \Stab(B)_0 = \S(\U(2) \times \U(3)) .\]
Thus, they showed that Theorem \ref{thm:2} holds for this particular choice of $A$ and $B$.  To prove the theorem in full generality we first need to recall how this special case works.

\begin{lem}
\label{lem:thm:2_special_case}
For the Jordan subalgebras $A$ and $B$ in Equation \eqref{eq:choice_of_AB} we have
\[     \Stab(A) \cap \Stab(B)_0 = \S(\U(2) \times \U(3)). \qedhere \]
\end{lem}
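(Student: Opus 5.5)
I will compute the two stabilizers separately and then intersect them, using the description $\O = \C \oplus \C^3$ from Lemma \ref{lem:octonions_from_complex_vectors}.

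\textbf{Step 1: the stabilizer of $A$.} An automorphism of $\h_3(\O)$ preserving $A$ also preserves its unit, the idempotent $e_1+e_2 = 1 - e_3$, and hence fixes the primitive idempotent $e_3$. The stabilizer of a primitive idempotent in $\F_4$ is the standard $\Spin(9)$. Conversely, since $A$ is exactly the Peirce $1$-space of $1-e_3$, $\Stab(A) = \Stab(e_3) \cong \Spin(9)$. This $\Spin(9)$ acts on $A \cong \h_2(\O)$, whose traceless part is $\R^9$, by the vector representation. It acts on the off-diagonal entries $(x,y) \in \O^2$ by the real $16$-dimensional spinor representation.

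\textbf{Step 2: the stabilizer of $B$.} Every automorphism of $\O$ extends entrywise to an automorphism of $\h_3(\O)$. So the $\SU(3) \subset \mathrm{Aut}(\O)$ fixing $i$, which acts on $\O = \C\oplus\C^3$ as in Section \ref{sec:octonions}, fixes $B$ pointwise. In addition, $\SU(3)$ acting on $\h_3(\C)$ by conjugation $M \mapsto gMg^*$ extends to $\h_3(\O)$. Writing $\h_3(\O) = \h_3(\C) \oplus (\C^3 \otimes \C^3)$, with the second summand given by the $\C^3$-parts of the off-diagonal entries, this produces the familiar subgroup $(\SU(3)\times\SU(3))/\Z_3$ of $\F_4$. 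I would verify that this subgroup preserves $B$ and equals $\Stab(B)_0$ by a dimension count. The Lie algebra of $\Stab(B)$ consists of derivations preserving $B$. Their restriction to $B$ lands in $\mathrm{Der}(\h_3(\C)) = \mathfrak{su}(3)$, and the kernel consists of derivations fixing $B$ pointwise. Such a derivation fixes the diagonal idempotents, so it lies in $\mathfrak{so}(8)$, where it acts by triality. It must also kill the complex off-diagonal entries, and this forces it into $\mathfrak{su}(3)$. Hence $\dim \Stab(B) = 16$ and $\Stab(B)_0 = (\SU(3)\times\SU(3))/\Z_3$. The full stabilizer has extra components, for example entrywise conjugation $i\mapsto -i$; this is why the identity component is needed.

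\textbf{Step 3: the intersection.} An element $(g,h)$ of $\Stab(B)_0$, with $g$ the conjugation factor and $h$ the octonionic $\SU(3)$, preserves $A$ exactly when it fixes $e_3$. By Step 1 this only constrains $g$: we need $g e_3 g^* = e_3$, which means $g \in \S(\U(2)\times\U(1)) \cong \U(2)$. Meanwhile $h \in \SU(3)$ is arbitrary. The intersection is therefore $(\U(2)\times \SU(3))/\Z_3$, where $\Z_3$ is the diagonal central subgroup with $\Z_3 \subset \U(2)$ via scalars $\omega I$. The map $\U(2)\times\SU(3) \to \S(\U(2)\times\U(3))$, $(u,h)\mapsto (u, \det(u)^{-1/3}h)$, must be defined carefully. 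Instead I would use $\S(\U(2)\times\U(3)) \cong (\U(1)\times\SU(2)\times\SU(3))/\Z_6$ and match the kernels to identify the two groups.

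\textbf{Main obstacle.} I expect the hardest part to be getting the identification of $\Stab(B)_0$, and the precise central quotient, right. This requires checking that the conjugation $\SU(3)$ genuinely extends to automorphisms of $\h_3(\O)$; the cleanest route is to define its action on $\C^3\otimes\C^3$ and verify that the Jordan product is preserved. It also requires checking which central elements act trivially. Once that is settled, matching $(\U(2)\times\SU(3))/\Z_3$ with $\S(\U(2)\times\U(3))$ is an explicit isomorphism. The element $\omega I_2 \in \U(2)$ paired with $\omega^{-1}$ in $\SU(3)$ should correspond to the identity, which fixes the gluing.
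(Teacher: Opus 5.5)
Your argument is correct, and its skeleton matches the paper's. You build $(\SU(3)\times\SU(3))/\Z_3$ acting on $\h_3(\C)\oplus\M_3(\C)$, identify it with $\Stab(B)_0$, and intersect with $\Stab(A)$. The differences are in how the middle steps are justified.

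The paper obtains $\Stab(B)_0$ by citing Borel--de Siebenthal theory for the maximality of this subgroup among connected subgroups of $\F_4$. You instead bound $\dim\Stab(B)\le 8+8$ directly. You use the restriction map to $\mathrm{Der}(\h_3(\C))\cong\mathfrak{su}(3)$ and a triality computation for its kernel. Derivations vanishing on $B$ lie in the $\mathfrak{so}(8)$ fixing $e_1,e_2,e_3$. Once they kill the real off-diagonal entries, they act entrywise by a single derivation of $\O$, and they must then annihilate $\C\subset\O$. This is more self-contained, needing only the input of Lemma~\ref{lem:Jordan_frames} plus the deferred check that the constructed group acts by automorphisms. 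It does not recover the maximality in $\F_4$ that Todorov and Dubois-Violette emphasize.

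Similarly, your observation that $\Stab(A)=\Stab(e_3)$ reduces the intersection to the single condition $ge_3g^\dagger=e_3$, because $e_3\in B$ and only the factor $g$ acts on $B$. The paper instead checks the two $\SU(3)$ factors against $A$ separately.

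For the final isomorphism, which the paper simply asserts, you can avoid cube roots altogether. The map $(u,h)\mapsto(\det(u)\,u,\ \det(u)^{-1}h)$ is a surjective homomorphism $\U(2)\times\SU(3)\to\S(\U(2)\times\U(3))$ with kernel $\{(\zeta^{-1}I_2,\zeta I_3):\zeta^3=1\}$, which is exactly your gluing. With the paper's convention $M\mapsto hMg^\dagger$, the relevant $\Z_3$ is instead the diagonal one $\{(\zeta I_2,\zeta I_3)\}$, and you just replace $h$ by $\bar h$ in the map.
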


\begin{proof}
The first step is to show that
\[   \Stab(B)_0 \cong (\SU(3) \times \SU(3))/\Z_3 \]
when $B$ is given as in Equation \eqref{eq:choice_of_AB}.
In Proposition \ref{lem:octonions_from_complex_vectors} we constructed the octonions $\O = \C \oplus \C^3$, and we use that description of the octonions here.   In this description, the inner product of octonions is
\[   \langle (a,\mathbf{a}), \; (b, \mathbf{b}) \rangle = \overline{a} b + \langle \mathbf{a}, \mathbf{b} \rangle \]
so the orthogonal complement of $\C \subset \O$ is $\C^3$.
The inner products on $\R$ and $\O$, in turn, give the exceptional Jordan algebra $\h_3(\O) \cong \R^3 \oplus \O^3$ an inner product.  This lets us split the exceptional Jordan algebra as an orthogonal direct sum of subspaces:
\[   \h_3(\O) \cong \h_3(\C) \oplus \h_3(\C)^\perp .\]
Concretely, we have
\[
\begin{array}{ccl}
\h_3(\C)^\perp 
&=& \left\{  \left( \begin{array}{ccc}  
                         0  &  z  & y^*    \\  
                         z^*       & 0 & x      \\ 
                         y       & x^*   & 0   
\end{array} \right) : \;  x,y,z \in \C^3 \subset \O \right\}  \\ \\
&\cong& \{(x,y,z) : \; x,y,z \in \C^3 \}  \\ \\
&\cong& \M_3(\C) 
\end{array}
\]
where $\M_3(\C)$ is the space of $3 \times 3$ complex matrices.  Here we think of $x,y,z \in \C^3$ as column vectors, and assemble them into a $3 \times 3$ matrix.

We thus obtain a vector space isomorphism
\[  \h_3(\O) \cong  \h_3(\C) \oplus \M_3(\C) 
 \]
which lets us think of an element of $\h_3(\O)$ as a pair
\[    (X,M) \in  \h_3(\C) \oplus \M_3(\C)  \]
Any element $(g,h) \in \SU(3) \times \SU(3)$ acts on such pairs as follows:
\[    (g,h) \maps (X,M) \mapsto (gXg^\dagger, hMg^\dagger ) \]
Note that $(e^{2\pi i/3}, e^{2\pi i/3}) \in \SU(3) \times \SU(3)$ acts trivially, so we get a representation of $(\SU(3) \times \SU(3))/\Z_3$ on $\h_3(\O)$.  

One can see that this action of $(\SU(3) \times \SU(3))/\Z_3$ on $\h_3(\O)$ preserves:
\begin{itemize}
\item the subalgebra $B \cong \h_3(\C)$ (by construction)
\item the Jordan product on $\h_3(\O)$ (by a straightforward calculation).
\end{itemize}
Using Borel--de Siebenthal theory \cite{BS,Dynkin,Yokota}, one can show that this copy of $(\SU(3) \times \SU(3))/\Z_3$ is a maximal connected subgroup of $\F_4$.  Thus, this subgroup is the \emph{largest} connected subgroup of $\F_4$ that preserves $B$:
\[   \Stab(B)_0 \cong (\SU(3) \times \SU(3))/\Z_3 .\]

The two $\SU(3)$'s in  $(\SU(3) \times \SU(3))/\Z_3$ act very differently on $\h_3(\O)$.  The first $\SU(3)$ acts to \emph{mix up} the matrix entries:
\[    g \maps (X,M) \mapsto (gXg^\dagger, Mg^\dagger ) \]
and only its subgroup $\S(\U(2) \times \U(1))$ preserves $A$.  This subgroup becomes the electroweak gauge group:
\[   \S(\U(2) \times \U(1)) \cong \big(\SU(2) \times \U(1)\big)/\Z_2 \]
The second $\SU(3)$ becomes the strong force gauge group: it acts \emph{separately} on the matrix entries $x,y,z$ in  
\[    \left( \begin{array}{ccc}  
                         \alpha  &  z  & y^*    \\  
                         z^*       & \beta & x      \\ 
                         y       & x^*   & \gamma   
\end{array} \right) \in \h_3(\O) 
\]
as octonion automorphisms that preserve $\C \subset \O$.  Thus, all of this copy of $\SU(3)$ preserves $A$.  Therefore
\[   \Stab(A) \cap \Stab(B)_0  \; \cong \; \big(\S(\U(2) \times \U(1)) \times \SU(3)\big) /\Z_3  \]
But
\[     
  \big(\S(\U(2) \times \U(1)) \times \SU(3)\big) /\Z_3  \;\cong\; \S(\U(2) \times \U(3))  
\]
We thus conclude
\[   \Stab(A) \cap \Stab(B)_0 \cong \; \big(\S(\U(2) \times \U(1)) \times \SU(3)\big) /\Z_3   \qedhere \]
\end{proof}

Todorov and Dubois-Violette \cite{TDV} emphasized that $\Stab(A) \cong \Spin(9)$ and $\Stab(B)_0 \cong (\SU(3) \times \SU(3))/\Z_3$ are maximal connected closed subgroups of $\F_4$.  However, in Remark 2 of Section 2.12, Yokota \cite{Yokota} notes that $\Stab(B)_0$ is not a maximal closed subgroup of $\F_4$: it is contained in a larger closed subgroup with at least two connected components.  One can check that both these components lie in $\Stab(B)$, so $\Stab(B)$ is not connected.

This is why taking the identity component of $\Stab(B)$ is necessary in Theorems \ref{thm:1} and \ref{thm:2}.  For example, $\Stab(X) \cap \Stab(B)$ is strictly larger than $\Stab(X) \cap \Stab(B)_0 \cong \S(\U(2) \times \U(3))$.  It includes not only elements of $\F_4$ that act on $\h_3(\C)$ via $X \mapsto UXU^{-1}$ where $U$ is a unitary transformation of $\C^3$, but  also, in another connected component, elements that act via $X \mapsto UXU^{-1}$ where $U$ is antiunitary.  The Standard Model does have an antiunitary symmetry, namely CPT symmetry, but this is not usually included in the gauge group.

\section{Transitivity of $\F_4$ on $\h_3(\C)$-subalgebras}
\label{sec:transitivity}

To prove Theorem \ref{thm:2} starting from the special case in Lemma \ref{lem:thm:2_special_case}, we shall show that $\F_4$ acts to map any other case to this case.   As a step toward this result, here we show that $\F_4$ acts transitively on the set of Jordan subalgebras of $\h_3(\O)$ that are isomorphic to $\h_3(\C)$.  

For this we use the theory of Jordan frames \cite[Chap.\ IV]{FarautKoranyi}.  Let $\K$ be a normed division algebra and let $n \ge 1$.  Let $\h_n(\K)$ be the space of $n \times n$ self-adjoint matrices with entries in $\K$, equipped with the product
\[     a \circ b = \frac{1}{2}(a b + b a) .\]
Then $\h_n(\K)$ is a Jordan algebra if $\K \cong \R, \C$ or $\H$, or if $\K \cong \O$ and $n \le 3$.   Assume henceforth that we are in one of these cases.   

An \define{idempotent} in $\h_n(\K)$ is an element $e$ with $e \circ e = e$.   Two idempotents $e, f$ are said to be \define{orthogonal} if $e \circ f = 0$, or equivalently $\langle e, f \rangle = 0$ with respect to the inner product
\[      \langle x, y \rangle = \tr(x \circ y) \]
on $\h_n(\K)$.  An idempotent is \define{minimal} if it is not a sum of pairwise orthogonal idempotents in a nontrivial way, or equivalently, its trace equals $1$.  A set of minimal idempotents that are pairwise orthogonal is called a \define{Jordan frame}.  Every Jordan frame of $\h_n(\K)$ consists of $n$ elements, as illustrated by the \define{standard} Jordan frame of $\h_3(\K)$:
\[ 
e_1 =  
\left( \begin{array}{ccc}
1 & 0 & 0 \\
0 & 0 & 0 \\
0 & 0 & 0 
\end{array}
\right) , \quad
e_2 =  
\left( \begin{array}{ccc}
0 & 0 & 0 \\
0 & 1 & 0 \\
0 & 0 & 0 
\end{array}
\right) , \quad
e_3 =  
\left( \begin{array}{ccc}
0 & 0 & 0 \\
0 & 0 & 0 \\
0 & 0 & 1 
\end{array}
\right) .
\]

We also use the theory of Peirce decompositions \cite[Chap.\ IV]{FarautKoranyi}.  For every idempotent $e\in\h_3(\K)$, the operation of multiplying by $e$ has at most three eigenvalues: $0,1/2$ and $1$.  Indeed $\h_n(\K)$ has a \define{Peirce decomposition}, an orthogonal direct sum decomposition
\[\h_3(\K)=E_0(e)\oplus E_{1/2}(e)\oplus E_1(e)\]
into the eigenspaces of multiplication by $e$:
\begin{equation}
\label{eq:peirce}
E_\lambda(e)=\{X\in\h_n(\K)\,|\,e \circ X=\lambda X\}.\end{equation}

We can use the Peirce decompositions of the idempotents $e_1,e_2,e_3$ to decompose $\h_3(\K)$ into 6 orthogonal subspaces.  First, we define linear injections $\xi_{ij}: \K\to\h_3(\K)$ by
\begin{align*}
\xi_{12}(x)&=\begin{pmatrix}0&x&0\\x^\ast&0&0\\0&0&0\end{pmatrix},&
\xi_{23}(x)&=\begin{pmatrix}0&0&0\\0&0&y\\0&y^\ast&0\end{pmatrix},&
\xi_{31}(x)&=\begin{pmatrix}0&0&z^\ast\\0&0&0\\z&0&0\end{pmatrix}
\end{align*}
for $x\in\K$.  This gives an orthogonal sum decomposition
\begin{equation}
\label{eq:decomposition_into_6_summands}
\begin{array}{ccl} 
\h_3(\K)&=&\spann_\R\{e_1\} \oplus \spann_\R\{e_2\} \oplus \spann_\R\{e_3\} \\ [3 pt]
& & \oplus \,\, \xi_{12}(\K) \, \oplus \, \xi_{23}(\K) \, \oplus \,\xi_{31}(\K). 
\end{array}
\end{equation}
Then we can relate these 6 summands to the Peirce eigenspaces of the standard Jordan frame:
\begin{equation}
\begin{array}{rcl}
\spann_\R\{e_i\} &=& E_1(e_i),  \\ [3 pt]
\xi_{ij}(\K)&=&E_{1/2}(e_i)\cap E_{1/2}(e_j),\\  [3 pt]
\spann_\R\{e_i,e_j\}\oplus\xi_{ij}(\K)&=& E_0(e_k), \\
\end{array}
\label{eq:peirce_facts}
\end{equation}
where $(i,j,k)$ is a cyclic permutation of $(1,2,3)$.  These are easy to check by direct calculation.

We shall also use these standard facts about the action of $\F_4$ on Jordan frames:

\begin{lem}
\label{lem:Jordan_frames}
The group $\F_4$ acts transitively on the set of Jordan frames of $\h_3(\O)$.  The stabilizer of any Jordan frame of $\h_3(\O)$ is isomorphic to $\Spin(8)$.  For the standard Jordan frame, the action of $\Spin(8)$ on $\h_3(\K)$ preserves all 6 subspaces in Equation \eqref{eq:decomposition_into_6_summands}.  On $\xi_{12}(\O), \xi_{23}(\O)$ and $\xi_{31}(\O)$ we can take this action to be the vector, left-handed real spinor, and right-handed real spinor representation of $\Spin(8)$, respectively.

\end{lem}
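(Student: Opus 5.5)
I would prove the three claims of Lemma \ref{lem:Jordan_frames} in the order: first the structure of the stabilizer of the standard frame, then transitivity. Let $\Stab(e_1,e_2,e_3)$ denote the subgroup of $\F_4$ fixing each of $e_1,e_2,e_3$. Automorphisms preserve the Jordan product, so any $g$ in this subgroup preserves each Peirce space $E_\lambda(e_i)$. By \eqref{eq:peirce_facts} it therefore preserves each of the six summands in \eqref{eq:decomposition_into_6_summands}, and it acts trivially on $\spann_\R\{e_i\}$. Write its action as
\[ g\,\xi_{12}(x)=\xi_{12}(g_1 x),\quad g\,\xi_{23}(x)=\xi_{23}(g_2x),\quad g\,\xi_{31}(x)=\xi_{31}(g_3 x),\qquad g_i\in \mathrm{O}(8). \]
Here orthogonality holds because $\F_4$ preserves the trace form and $\xi_{ij}$ is a similarity onto its image. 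Computing $\xi_{12}(x)\circ\xi_{23}(y)$ shows that it is a multiple of $\xi_{31}$ applied to $(xy)^\ast$. Hence the Jordan product is preserved exactly when $(g_1x)(g_2y)=g_3'(xy)$ holds for all $x,y$, where $g_3'$ is $g_3$ conjugated by octonion conjugation. In other words, $(g_1,g_2,g_3')$ is a triple satisfying Cartan's triality relation. Conversely, each such triple defines an automorphism: the remaining products $\xi_{ij}(x)\circ\xi_{ij}(y)=\langle x,y\rangle(e_i+e_j)$ and $e_i\circ\xi_{ij}(x)=\tfrac12\xi_{ij}(x)$ are automatically preserved.

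The next step is to use the classical triality theorem. The group of triples $(g_1,g_2,g_3)\in \mathrm{SO}(8)^3$ with $(g_1x)(g_2y)=g_3(xy)$ is isomorphic to $\Spin(8)$. Each projection to a factor is a double cover of $\mathrm{SO}(8)$, and the three factors realize the vector and the two half-spin representations. One point needs checking: a triple with some $g_i\in \mathrm{O}(8)\setminus \mathrm{SO}(8)$ cannot occur. I would get this from connectedness, since $\Stab(e_1,e_2,e_3)$ is connected, or directly from the known proof that triality triples lie in $\mathrm{SO}(8)$. This gives $\Stab(e_1,e_2,e_3)\cong \Spin(8)$, acting on $\xi_{12}(\O),\xi_{23}(\O),\xi_{31}(\O)$ by the vector and the two real spinor representations. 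Which spinor is called left-handed is a matter of convention, which is why the statement says ``we can take''.

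For transitivity, I would use the spectral theorem for $\h_3(\O)$: every element is $g\,\mathrm{diag}(\lambda_1,\lambda_2,\lambda_3)$ for some $g\in\F_4$ (Freudenthal, Yokota Thm.\ 2.8.1). Given a Jordan frame $(f_1,f_2,f_3)$, apply this to $X=f_1+2f_2+3f_3$. First, the $f_i$ are its spectral idempotents, since $X$ acts on the Peirce spaces of the frame with eigenvalues $1,2,3$. Next, choose $g$ with $g^{-1}X=\mathrm{diag}(\lambda_1,\lambda_2,\lambda_3)$. The eigenvalues must be $1,2,3$, since they are determined by the minimal polynomial. After permuting coordinates with an element of $\F_4$ (permutation matrices act as automorphisms), the spectral idempotents of $g^{-1}X$ are $e_1,e_2,e_3$. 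So $g^{-1}f_i=e_i$, which proves transitivity. Every frame stabilizer is then conjugate to $\Spin(8)$. If the stabilizer is meant to allow permuting the frame, one gets an extension by a finite group. I would read the statement as the pointwise stabilizer.

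The main obstacle is the diagonalization theorem for $\h_3(\O)$, since the octonions are nonassociative and naive conjugation arguments fail. My first approach would be to cite it as standard. Failing that, I would prove it in two steps. First, $\F_4$ is transitive on minimal idempotents, i.e.\ on $\O P^2$: use $\Spin(9)=\Stab(e_1)$ acting on $E_{1/2}(e_1)\cong\O^2$ together with a dimension count. Second, reduce to $\h_2(\O)\cong E_0(e_1)$, where $\Spin(9)$ acts as $\mathrm{SO}(9)$ on the traceless part and a rotation diagonalizes any element.
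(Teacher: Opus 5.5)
Your proof is correct. For the stabilizer it follows the same argument the paper cites, but for transitivity it takes a different route. The paper's proof consists of two citations:
\begin{itemize}
\item Transitivity on Jordan frames is the general theorem that the automorphism group of any simple Euclidean Jordan algebra acts transitively on its frames (Faraut--Kor\'anyi, Thm.~IV.2.5), combined with simplicity of $\h_3(\O)$.
\item The structure of the frame stabilizer is quoted from Yokota's Thm.~2.7.1.
\end{itemize}
For the stabilizer you reconstruct Yokota's argument: Peirce spaces are preserved, the $g_i$ are orthogonal, and the product of $\xi_{12}(\O)$ with $\xi_{23}(\O)$ turns preservation of the Jordan product into the triality relation. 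This has the merit of showing \emph{why} the three off-diagonal spaces carry the vector and the two half-spin representations.

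For transitivity you cite instead the $\F_4$-diagonalization theorem for $\h_3(\O)$. You deduce frame transitivity from it using the element $f_1+2f_2+3f_3$ and the $S_3$ of permutation automorphisms. This runs the implication opposite to the general theory, where diagonalization by automorphisms comes from frame transitivity plus the abstract spectral theorem. The paper's route is shorter and uniform over all simple Euclidean Jordan algebras. Yours is specific to $\h_3(\O)$ and rests on an equally substantial cited theorem; your fallback sketch of diagonalization is only an outline. Your reading of ``stabilizer'' as the pointwise stabilizer is the right one. The paper defines a frame as a set, whose setwise stabilizer would be $\Spin(8)\rtimes S_3$, but Lemma~\ref{lem:transitivity_for_subalgebras} uses the pointwise stabilizer.

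Three details need tightening.
\begin{itemize}
\item \textbf{Excluding determinant $-1$.} Connectedness of $\Stab(e_1,e_2,e_3)$ is not available at that stage, so rely on the direct argument.
Setting $y=1$ and then $x=1$ in $(g_1x)(g_2y)=g_3'(xy)$ shows all three determinants agree. A determinant $-1$ triple can be normalized, using an $\mathrm{SO}(8)$-triple, to $g_1=\kappa$ (octonion conjugation). This forces $u(vc)=(vu)c$ for all $u,v$ and some unit $c$. Taking $v=\bar c$ makes $c$ central, so $c=\pm1$, and then $\O$ would be commutative.
\item \textbf{The other cyclic products.} Preservation of the Jordan product also involves $\xi_{23}\circ\xi_{31}$ and $\xi_{31}\circ\xi_{12}$. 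The corresponding relations follow from the first because $\mathrm{Re}(xyz)$ is cyclically symmetric. You should say this, especially for the converse direction.
\item \textbf{Spectral idempotents.} The $f_i$ are the spectral idempotents of $X$ because $X^n=\sum_i i^n f_i$, so each $f_i$ is a Lagrange interpolation polynomial in $X$. This also gives the minimal polynomial $(t-1)(t-2)(t-3)$. Multiplication by $X$ has eigenvalues $\tfrac32,2,\tfrac52$ on the off-diagonal Peirce spaces, not just $1,2,3$, so your stated reason is not quite right.
\end{itemize}
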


\begin{proof}
For any simple Jordan algebra $A$, $\Aut(A)$ acts transitively on the set of Jordan frames \cite[Thm.\ IV.2.5]{FarautKoranyi}.   Jordan, Wigner and von Neumann showed that $\h_3(\O)$ is simple \cite{JVW}.   Yokota \cite[Thm.\ 2.7.1]{Yokota} presented a proof that the stabilizer of the standard Jordan frame is $\Spin(8)$, which can be taken to act on on $\xi_{12}(\O), \xi_{23}(\O)$ and $\xi_{31}(\O)$ via the vector, left-handed real spinor, and right-handed real spinor representations of $\Spin(8)$,  respectively.  By transitivity, the stabilizer of any Jordan frame is isomorphic to $\Spin(8)$.  
\end{proof}

Using this machinery, we can prove that $\F_4$ acts transitively on the set of Jordan subalgebras that are isomorphic to $\h_3(\C)$:

\begin{lem}
\label{lem:transitivity_for_subalgebras}
$\F_4$ acts transitively on the set of Jordan subalgebras $B \subset \h_3(\O)$ with $B \cong \h_3(\C)$.
\end{lem}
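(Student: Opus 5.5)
Given a Jordan subalgebra $B\subset\h_3(\O)$ with an isomorphism $\phi\maps\h_3(\C)\to B$, I will move $B$ onto the standard copy $B_0$ of $\h_3(\C)$ from Equation \eqref{eq:choice_of_AB} in three stages.

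\emph{Step 1: the frame.} The images $f_i=\phi(e_i)$ of the standard Jordan frame of $\h_3(\C)$ are pairwise orthogonal idempotents in $\h_3(\O)$, since $\phi$ preserves the Jordan product. To see they form a Jordan frame of $\h_3(\O)$ I need each $f_i$ to be minimal there, i.e.\ of trace $1$. I will argue that $\phi(1)$ is an idempotent of $\h_3(\O)$ that is a sum of three nonzero orthogonal idempotents. Each nonzero idempotent has integer trace $\ge 1$ and total trace is at most $3$, so each $f_i$ has trace exactly $1$ and $\phi(1)=1$. By Lemma \ref{lem:Jordan_frames} there is $g\in\F_4$ with $g f_i=e_i$, so after replacing $B$ by $gB$ I may assume $\phi(e_i)=e_i$.

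\emph{Step 2: the off-diagonal pieces.} Since $\phi$ preserves products, it preserves the Peirce spaces relative to the frame. By \eqref{eq:peirce_facts} this gives $\phi(\xi_{ij}(\C))\subset\xi_{ij}(\O)$. So $B=\spann\{e_1,e_2,e_3\}\oplus\xi_{12}(L_{12})\oplus\xi_{23}(L_{23})\oplus\xi_{31}(L_{31})$, where each $L_{ij}\subset\O$ is a real $2$-plane.

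Next I use the Jordan product $\xi_{ij}(x)\circ\xi_{ij}(x)=\tfrac12|x|^2(e_i+e_j)$. Together with $\phi(e_i+e_j)=e_i+e_j$, this shows $\phi(\xi_{ij}(1))=\xi_{ij}(u_{ij})$ with $|u_{ij}|=1$. I also use the products $\xi_{12}(x)\circ\xi_{23}(y)=\tfrac12\xi_{31}(\overline{xy})$ (up to the paper's conventions). Hence the planes satisfy $\overline{L_{12}L_{23}}\subset L_{31}$ and cyclically.

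Now I use $\Spin(8)$, the frame stabilizer from Lemma \ref{lem:Jordan_frames}, whose actions on the three $\xi_{ij}(\O)$ are the vector and the two spinor representations. The vector representation is transitive on the unit sphere, so I can make $u_{12}=1$. The residual stabilizer $\Spin(7)$ acts on the spinor $S^7$'s transitively. I can then make $u_{23}=1$, leaving $\mathrm{G}_2$. The product relation then forces $u_{31}=1$ up to conjugation conventions.

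At that point each $L_{ij}$ contains $1$, and all three equal a common plane $L=\spann\{1,v\}$ with $v$ a unit imaginary octonion. I expect the relations to enforce $L_{12}=L_{23}=L_{31}$ after suitable conjugation, since the products of the $i$-parts must land in the right place. Since $\mathrm{G}_2$ acts transitively on unit imaginary octonions, a final element moves $v$ to $i$. Then $B=B_0$.

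\emph{Main obstacle.} The hardest part is the bookkeeping in Step 2. I must verify the exact product formula among the $\xi_{ij}$ with the paper's conjugation conventions, and make sure the successive stabilizers ($\Spin(7)$, then $\mathrm{G}_2$) really act transitively as claimed. If chasing these triality conventions becomes messy, I will instead normalize differently. Because $\phi$ is an isomorphism, the element $\phi(\xi_{12}(1))\circ\phi(\xi_{23}(1))$ determines $\phi(\xi_{31}(1))$. Thus only two unit vectors need normalizing, and the whole structure is determined by the image of the subalgebra $\h_3(\R)$ plus one imaginary unit. I can then reduce to showing that the $\Spin(8)$-stabilizer of $\h_3(\R)$, which is $\mathrm{G}_2$, is transitive on unit imaginary octonions.
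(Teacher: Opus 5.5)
Your proof is correct, but from the frame onward it normalizes differently from the paper. Both arguments put a frame of $B$ in standard position. Both then use the Peirce decomposition to write $B=\spann\{e_1,e_2,e_3\}\oplus\xi_{12}(L_{12})\oplus\xi_{23}(L_{23})\oplus\xi_{31}(L_{31})$, with the cyclic relations $\overline{L_{12}L_{23}}\subseteq L_{31}$.

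\textbf{The paper's route.} It works with planes directly:
\begin{enumerate}
\item It uses transitivity of the vector representation on the Grassmannian of $2$-planes to get $L_{12}=\C$.
\item It notes the relations then force $L_{23}=\C a$ and $L_{31}=a^\ast\C$ for any unit $a\in L_{23}$.
\item It sends $a\mapsto 1$ using the stabilizer $(\Spin(6)\times\Spin(2))/\Z_2$ of $\C$, via $\Spin(6)\cong\SU(4)$ acting transitively on the spinor unit sphere.
\end{enumerate}

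\textbf{Your route.} You track the isomorphism $\phi$ and normalize the unit vectors $u_{12},u_{23}$ along the chain $\Spin(8)\supset\Spin(7)\supset\mathrm{G}_2$. This makes $\phi$ the identity on $\h_3(\R)$. The value $u_{31}=1$ then follows exactly, with no convention ambiguity, from $\phi(\xi_{31}(1))=2\,\phi(\xi_{12}(1))\circ\phi(\xi_{23}(1))=\xi_{31}(1)$. Only after that do you deal with the planes.

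\textbf{Comparison.} The paper's route is shorter and needs only one nontrivial transitivity fact. Yours relies on more familiar stabilizer facts. Your trace count also actually justifies that the $\phi(e_i)$ are minimal in $\h_3(\O)$, which the paper takes for granted. Finally, your last step needs only the entrywise action of $\Aut(\O)$. That sidesteps any question of how a residual group acts in three different representations.

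\textbf{Things to tidy.}
\begin{itemize}
\item The identity is $\xi_{ij}(x)\circ\xi_{ij}(x)=|x|^2(e_i+e_j)$, without the $\tfrac12$. This is harmless, since the same identity holds in $\h_3(\C)$ and you still get $|u_{ij}|=1$.
\item The step you only ``expect'' is immediate. Since $1\in L_{23}$, the relation gives $\overline{L_{12}}\subseteq L_{31}$. Since $1\in L_{12}$, it gives $\overline{L_{23}}\subseteq L_{31}$. A $2$-plane containing $1$ is closed under conjugation, so by dimension $L_{12}=L_{23}=L_{31}=\spann\{1,v\}$ with $v$ unit imaginary.
\item For the final move, take $h\in\Aut(\O)$ with $h(v)=i$ and apply it entrywise. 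This lies in $\F_4$, fixes $\h_3(\R)$, and sends $L$ to $\C$ in all three slots simultaneously. Equivalently, the product formula $\xi_{12}(x)\circ\xi_{23}(y)=\tfrac12\xi_{31}(\overline{xy})$ shows that any element of $\F_4$ fixing $\h_3(\R)$ pointwise acts by one and the same automorphism of $\O$ on all three off-diagonal slots. So your residual $\mathrm{G}_2$ really does act uniformly.
\end{itemize}
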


\begin{proof}
It suffices to show that for any Jordan subalgebra $B\subset\h_3(\O)$ isomorphic to $\h_3(\C)$ can be mapped, using some element of $\F_4$, to the standard copy of $\h_3(\C)$ given in Equation \eqref{eq:choice_of_AB}.

Since $B \cong \h_3(\C)$, it must contain some Jordan frame with three elements.  By virtue of Lemma \ref{lem:Jordan_frames}, we can apply some element $g \in \F_4$ to map this frame to the standard Jordan frame in $\h_3(\C)$, which is also the standard frame of $\h_3(\O)$.  Thus, without loss of generality we may assume $B$ contains $e_1, e_2, e_3 \in \h_3(\O)$.   

Next, note that the Peirce decomposition is compatible with passing to a subalgebra: that is, for any idempotent $e\in \h_3(\O)$ that happens to lie in $B$, we have
\[ B=(E_0(e)\cap B)\oplus(E_{1/2}(e)\cap B)\oplus(E_1(e)\cap B).\]
Combining Equations \eqref{eq:decomposition_into_6_summands} and \eqref{eq:peirce_facts}, we obtain
\begin{align*}
B=\spann_\R\{e_1,e_2,e_3\}&\oplus(E_{1/2}(e_1)\cap E_{1/2}(e_2)\cap B)\\
&\oplus(E_{1/2}(e_2)\cap E_{1/2}(e_3)\cap B)\\
&\oplus(E_{1/2}(e_3)\cap E_{1/2}(e_1)\cap B).
\end{align*}
By Equation \eqref{eq:decomposition_into_6_summands}, $E_{1/2}(e_i)\cap E_{1/2}(e_j)$ is the image of $\xi_{ij}$.   We define 
\[  V_{ij} =\xi_{ij}^{-1}(E_{1/2}(e_i)\cap E_{1/2}(e_j)\cap B) \subset \O .\]
A matrix calculation shows that
\begin{equation}
\xi_{ij}(x)\circ\xi_{jk}(y)=\xi_{ki}(y^\ast x^\ast)
\label{eq:xi_mult}
\end{equation}
for all cyclic permutations $(i,j,k)$ of $(1,2,3)$. Hence if $x\in V_{ij}$ and $y\in V_{jk}$, then $y^\ast x^\ast=(xy)^\ast\in V_{ki}$, showing 
\begin{equation}
\label{eq:decomposition}
\begin{array}{ccl}
B &=&\left\{\begin{pmatrix}
    \alpha_1&x&z^\ast\\
    x^\ast&\alpha_2&y\\
    z&y^\ast&\alpha_3\end{pmatrix} \; \colon \; \alpha_i\in\R,\ x\in V_{12},\ y\in V_{23}, z\in V_{31}\right\}\\ \\
&=& \spann_\R\{e_1,e_2,e_3\}\oplus\xi_{12}(V_{12})\oplus\xi_{23}(V_{23})\oplus\xi_{31}(V_{31}),
\end{array}
\end{equation}
and 
\begin{align}
V_{12}\cdot V_{23}&\subseteq V_{31}^\ast,&
V_{23}\cdot V_{31}&\subseteq V_{12}^\ast,&
V_{31}\cdot V_{12}&\subseteq V_{23}^\ast.
\label{eq:mult_rel1}
\end{align}

By our assumptions, there is an isomorphism $B \cong \h_3(\C)$ preserving the elements $e_1,e_2,e_3 \in \h_3(\O)$.  Thus $E_{1/2}(e_i)\cap E_{1/2}(e_j)\cap B$ is isomorphic to $\xi_{ij}(\C) = E_{1/2}(e_i)\cap E_{1/2}(e_j)\cap\h_3(\C)$, which has dimension 2. Since $\xi_{ij}$ is injective, it follows that
\[
\dim_\R(V_{12}) = \dim_\R(V_{23}) = \dim_\R(V_{31}) = 2 .
\]

From Lemma~\ref{lem:Jordan_frames} we know that the subgroup of $\F_4$ stabilizing the standard Jordan frame is $\Spin(8)$, which preserves the three spaces $\xi_{ij}(\O)$, and can be taken to act on on $\xi_{12}(\O), \xi_{23}(\O)$ and $\xi_{31}(\O)$ via the vector, left-handed real spinor, and right-handed real spinor representations of $\Spin(8)$, respectively. 

Each map $\xi_{ij} \colon \O\to\h_3(\O)$ is $\Spin(8)$-equivariant for the corresponding representation of $\Spin(8)$ on $\O$.  Clearly the vector representation of $\Spin(8)$ acts transitively on the Grassmannian of 2-planes in $\O$.  Thus, we can act on $B$ with an element $g \in \Spin(8) \subset \F_4$ so that the 2-dimensional space $V_{12}$ gets mapped to $\C\subset\O$.  

This allows us to assume without loss of generality that $V_{12}=\C$, which we do henceforth.   Choosing some unit vector $a \in V_{23}$, we then have
\begin{equation}
\label{eq:B_decomposition}
B=\spann\{e_1,e_2,e_3\}\oplus\xi_{12}(\C)\oplus\xi_{23}(\C\cdot a)\oplus\xi_{31}(a^\ast\cdot\C) 
\end{equation}

To find an element of $\F_4$ mapping $B$ to the standard copy of $\h_3(\C)$ in $\h_3(\O)$, it then suffices to find $g \in \Spin(8)$ whose action in the vector representation of $\Spin(8)$ preserves $\C \subset \O$, while its action in the left-handed spinor representation maps $a$ to $1 \in \O$.  Such an element $g$ exists because the stabilizer of $\C \subset \O$ in the vector representation is $(\Spin(6) \times \Spin(2))/\Z_2 \subset \Spin(8)$, and the $\Spin(6)$ subgroup acts transitively on the unit sphere of $\O$ in the left-handed spinor representation of $\Spin(8)$.  This transitivity follows from the fact that $\Spin(6) \cong \SU(4)$, with the two spinor representations of $\Spin(8)$ on $\O \cong \C^4$ restricting to representations that are isomorphic to the tautologous representation of $\SU(4)$ on $\C^4$ and its dual \cite{Adams,Harvey}.
\end{proof}

\section{Proofs of theorems}
\label{sec:proofs}

Next we prove that $\F_4$ acts transitively on pairs of Jordan subalgebras $A,B \subset \h_3(\O)$ with $A \cong \h_2(\O)$ and $B \cong \h_3(\C)$ such that $A \cap B \cong \h_2(\C)$.  This quickly leads to a proof of Theorem \ref{thm:2}, and with more reflection a proof of Theorem \ref{thm:1}.  

To get started, we need some facts about Jordan subalgebras of $\h_3(\O)$ that are isomorphic to $\h_2(\C)$ or $\h_2(\O)$.  Even though we are only interested in $\C$ and $\O$, the following four lemmas are just as easy to prove in greater generality.  Thus, let $\K$ and $\L$ denote any of the normed division algebras $\R$, $\C$, $\H$ or $\O$.  As it turns out, the hypotheses of Lemmas \ref{lem:unitisline} and \ref{lem:uniqueness} can only hold when $\dim(\L) \le \dim(\K)$, but all the lemmas are true in general.

\begin{lem}
\label{lem:E1subalgebra}
Let $\ell\in\h_3(\K)$ be an idempotent with $\tr(\ell)=2$. Then the Peirce subspace $E_1(\ell) = \{X\in\h_3(\K)\,|\,\ell \circ X= X\}$ is a subalgebra of $\h_3(\K)$ isomorphic to $\h_2(\K)$.
\end{lem}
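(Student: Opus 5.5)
The plan is to reduce to a standard idempotent and then read off $E_1$ from Equation~\eqref{eq:peirce_facts}.

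First I show that $\ell$ can be moved to $e_1+e_2$ by an automorphism of $\h_3(\K)$. I would use the Jordan frame theory of \cite[Chap.\ IV]{FarautKoranyi}: every idempotent is a sum of pairwise orthogonal minimal idempotents. Since minimal idempotents have trace $1$, the hypothesis $\tr(\ell)=2$ gives $\ell=f_1+f_2$ with $f_1,f_2$ orthogonal minimal idempotents. Next I complete this to a Jordan frame $\{f_1,f_2,f_3\}$ by taking $f_3=1-\ell$. To justify this, note that $1-\ell$ is an idempotent, it is orthogonal to $\ell$ (since $\ell\circ(1-\ell)=\ell-\ell=0$), and its trace is $3-2=1$, so it is minimal. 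Orthogonality to $f_1$ and $f_2$ individually follows from the Peirce relations: $f_3\in E_0(\ell)$, and $E_1(\ell)\circ E_0(\ell)=0$ with $f_1,f_2\in E_1(\ell)$.

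Now I apply transitivity of $\Aut(\h_3(\K))$ on Jordan frames, which holds because $\h_3(\K)$ is simple, again by \cite[Thm.\ IV.2.5]{FarautKoranyi} (for $\K=\O$ this is Lemma~\ref{lem:Jordan_frames}). This gives an automorphism $g$ with $g(f_i)=e_i$. Then $g(\ell)=e_1+e_2$. Since $g$ preserves the Jordan product, $g(E_1(\ell))=E_1(e_1+e_2)$. It therefore suffices to treat $\ell=e_1+e_2$, because an automorphism carries subalgebras to isomorphic subalgebras.

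For $\ell=e_1+e_2$, I compute $(e_1+e_2)\circ X$ on each of the six summands in Equation~\eqref{eq:decomposition_into_6_summands}, using Equation~\eqref{eq:peirce_facts}:
\begin{itemize}
\item $e_1$ and $e_2$ have eigenvalue $1$, and $e_3$ has eigenvalue $0$;
\item on $\xi_{12}(\K)$ the eigenvalue is $\tfrac12+\tfrac12=1$;
\item on $\xi_{23}(\K)$ and $\xi_{31}(\K)$ the eigenvalue is $\tfrac12+0=\tfrac12$.
\end{itemize}
Hence
\[E_1(e_1+e_2)=\spann_\R\{e_1,e_2\}\oplus\xi_{12}(\K),\]
which is exactly the set of matrices supported in the upper-left $2\times 2$ block. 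This space is closed under the Jordan product, since products of such block matrices stay in the block, and the map $\h_2(\K)\to\h_3(\K)$ that pads with zeros is an injective Jordan homomorphism onto it. So $E_1(\ell)$ is a subalgebra isomorphic to $\h_2(\K)$.

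The only step needing care is the first one: the existence of the decomposition $\ell=f_1+f_2$ and the transitivity on frames for all four choices of $\K$. Both are standard consequences of the spectral theorem for Euclidean Jordan algebras, so I would cite \cite{FarautKoranyi} rather than reprove them. As an alternative route, one could show directly that $E_1(e)$ is a subalgebra for any idempotent $e$ (a standard Peirce fact), but the identification with $\h_2(\K)$ is cleanest after normalizing to $e_1+e_2$.
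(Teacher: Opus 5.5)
Your proof is correct and follows essentially the same route as the paper: use the Faraut--Kor\'anyi transitivity results to move $\ell$ to $e_1+e_2$ by an automorphism, then identify $E_1(e_1+e_2)$ with the upper-left $2\times 2$ block. The paper is slightly more direct, since it applies transitivity on trace-$1$ idempotents to $p=1-\ell$ and so never splits $\ell=f_1+f_2$, whereas you complete to a full frame and use transitivity on frames; your explicit Peirce eigenvalue computation of $E_1(e_1+e_2)$ fills in a step the paper simply states.
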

\begin{proof}
If we set $p=1-\ell$, then $p$ is an idempotent of trace $1$. The automorphism group of a simple Jordan algebra acts transitively on those \cite[Cor.~IV.2.4]{FarautKoranyi}, so without loss of generality we can assume that $p=e_3$, that is $\ell=e_1+e_2$. Then
\[E_1(\ell) = \left\{\begin{pmatrix}
    \alpha_1 & x &0 \\
    x^\ast & \alpha_2 & 0 \\
    0 & 0 & 0 \end{pmatrix} \; \colon \; \alpha_i\in\R, x\in \K \right\} \cong \h_2(\K).\]
\end{proof}

\begin{lem}
\label{lem:unitisline}
Let $A\subset\h_3(\K)$ be a subalgebra isomorphic to 
$\h_2(\L)$. Then the unit of $A$, regarded as an element $\ell \in \h_3(\K)$, is an idempotent with $\tr(\ell)=2$.
\end{lem}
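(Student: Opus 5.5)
The plan is to show two things about the unit $\ell$ of $A$: that it is an idempotent of $\h_3(\K)$, and that its trace is exactly $2$.

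\emph{Idempotency.} The Jordan product on $A$ is the restriction of the product on $\h_3(\K)$. So the unit $\ell$ of $A$ satisfies $\ell\circ\ell=\ell$ as an element of $\h_3(\K)$, and $\ell$ is an idempotent there.

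\emph{Trace.} By the spectral theorem in $\h_3(\K)$ (every element lies in some Jordan frame decomposition, \cite[Chap.\ IV]{FarautKoranyi}), an idempotent is a sum of pairwise orthogonal minimal idempotents drawn from a Jordan frame. Hence $\tr(\ell)\in\{0,1,2,3\}$. I will rule out $0$, $1$ and $3$ separately.

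\textbf{Ruling out $\tr(\ell)=0$ and $\tr(\ell)=1$.} A trace-$0$ idempotent is $0$, which cannot be the unit of the nonzero algebra $A$. For $\tr(\ell)=1$, I use the Peirce decomposition (\ref{eq:peirce}) of $\ell$. Since $\ell\circ X=X$ for all $X\in A$, we get $A\subset E_1(\ell)$. When $\ell$ is a minimal idempotent, after moving it to $e_1$ with $\F_4$ (or $\Aut(\h_3(\K))$, which is transitive on trace-1 idempotents, \cite[Cor.~IV.2.4]{FarautKoranyi}), Equation (\ref{eq:peirce_facts}) gives $E_1(\ell)=\spann_\R\{\ell\}$, which is one-dimensional. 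But $\dim A=\dim\h_2(\L)\ge 3$, a contradiction.

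\textbf{Ruling out $\tr(\ell)=3$ (main step).} Here $\ell=1$, so $A$ contains the identity of $\h_3(\K)$. In $\h_2(\L)$, a unit is a sum of two orthogonal minimal idempotents $f_1,f_2$ of $A$; these are the images of $\mathrm{diag}(1,0)$ and $\mathrm{diag}(0,1)$. Each $f_i$ is a nonzero idempotent of $\h_3(\K)$, so $\tr(f_i)\ge1$. Since $\tr(f_1)+\tr(f_2)=3$, one of them, say $f_2$, has trace $2$ and $f_1$ has trace $1$.

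Next I use the symmetry of $\h_2(\L)$ swapping $f_1$ and $f_2$. Conjugation by $\begin{pmatrix}0&1\\1&0\end{pmatrix}$ is a Jordan automorphism of $\h_2(\L)$ exchanging them. Concretely, with $s=\xi(1)\in A$ (the off-diagonal element), the quadratic map $U_s(x)=2s\circ(s\circ x)-s^{2}\circ x$ sends $f_1$ to $f_2$. The $U$-operator is built from the Jordan product, so it agrees in $A$ and in $\h_3(\K)$. In $\h_3(\K)$ we have $s^2=1$, since this already holds in $A$ where $1$ is the unit. So $U_s$ is an automorphism of $\h_3(\K)$: it is invertible with $U_s^2=U_{s^2}=\mathrm{id}$ by the fundamental formula, and quadratic maps of invertible elements with $s^2=1$ are automorphisms. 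Automorphisms preserve trace (the trace counts the number of minimal idempotents in a decomposition). This forces $\tr(f_1)=\tr(f_2)$, contradicting $1\ne2$.

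The delicate part is justifying that $U_s$ with $s^2=1$ is an automorphism and preserves trace. An alternative I would try first is more elementary: compute $\tr(s\circ(s\circ f_1))$ directly using associativity of the trace form $\tr((x\circ y)\circ z)=\tr(x\circ(y\circ z))$. From $s\circ(s\circ f_1)=\tfrac12(f_1+f_2)$, which holds in $A$, the trace is $\tfrac12\cdot 3$. On the other hand, associativity gives $\tr((s\circ s)\circ f_1)=\tr(f_1)=1$ if $s$ and $f_1$ can be rearranged, or we can compare $\tr(s\circ(s\circ f_i))=\tr((s\circ s)\circ f_i)$ is false in general. So I would rely on the $U$-operator route and cite the fundamental formula from \cite{FarautKoranyi}. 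Hence $\tr(\ell)=2$.
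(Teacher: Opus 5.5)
Your argument is correct, but its decisive step takes a different and heavier route than the paper's. To exclude $\tr(\ell)=3$ you show that $U_s$ is an automorphism of $\h_3(\K)$ exchanging $f_1$ and $f_2$. This is true: applying the fundamental formula to $1$ gives $(U_sy)^2=U_sU_yU_s1=U_s(y^2)$ when $s^2=1$, so $U_s$ preserves squares and hence, by polarization, the Jordan product. Equivalently, $U_s$ is the Peirce reflection of the idempotent $\tfrac12(1+s)$.

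This route only works when $s^2$ is the unit of $\h_3(\K)$. That is why you need a case split and a separate Peirce-space argument for $\tr(\ell)=1$. That separate argument is avoidable anyway, since $\tr(\ell)=\tr(f_1)+\tr(f_2)\ge 2$ because each $f_i$ is a nonzero idempotent.

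The paper's proof is exactly the ``elementary alternative'' you abandoned. Associativity of the trace form with $x=y=s$, $z=f_i$ gives
\[ \tr(f_i)=\tr(s^2\circ f_i)=\tr\big(s\circ(s\circ f_i)\big)=\tfrac12\tr(s^2)=\tfrac12\tr(\ell) \]
for $i=1,2$ and any $\ell$, so $\tr(f_1)=\tr(f_2)$. Together with $\tr(f_i)\ge 1$ and $\tr(f_1)+\tr(f_2)\le 3$, this forces $\tr(\ell)=2$ with no cases.

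Your claim that $\tr(s\circ(s\circ f_i))=\tr((s\circ s)\circ f_i)$ is ``false in general'' is mistaken. It is literally the identity $\tr((x\circ y)\circ z)=\tr(x\circ(y\circ z))$ with $x=y=s$, $z=f_i$. The clash $1=\tfrac32$ you had already computed was the contradiction you wanted.

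Your route gives a conceptual reason for the result: an ambient automorphism swapping the two idempotents. The paper's gives a two-line, case-free computation that needs only trace-form associativity.
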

\begin{proof}
Let $f\maps\h_2(\L)\to\h_3(\K)$ be an injective algebra homomorphism with image $A$; then $\ell=f(1)$.   Since $1$ is an idempotent, so is $\ell$, so we only need to show that the trace of $\ell$ is $2$.

In $\h_2(\L)$ let $p_1=(\begin{smallmatrix}1&0\\0&0\end{smallmatrix})$, $p_2=(\begin{smallmatrix}0&0\\0&1\end{smallmatrix})$ and $u=(\begin{smallmatrix}0&1\\1&0\end{smallmatrix})$. These satisfy
\[
p_i^2 = 1, \quad
u^2 = p_1+p_2  =1, \quad
p_i\circ u =\tfrac{1}{2} u.
\]
The elements $q_1=f(p_1)$, $q_2=f(p_2)$ and $\ell=q_1+q_2$ are themselves nonzero idempotents, which means they all have trace $1$, $2$ or $3$.  Thus, if $\tr(q_1)=\tr(q_2)$, the only possibility is that both equal $1$.
 
By the identity $\tr((a \circ b) \circ c) = \tr(a \circ (b \circ c))$ \cite[Prop. II.4.3]{FarautKoranyi}, we find
\begin{align*}
\tr(q_i)&=\tr(f(p_i\circ 1))=\tr(f(p_i\circ u^2)) =
\tr (q_i\circ f(u)^2)=\tr((q_i\circ f(u))\circ f(u))\\
&=\tr(f(p_i\circ u)\circ f(u))=\tfrac12\tr(f(u)^2),
\end{align*}
which is independent of $i=1,2$. Thus $\tr(q_1)=\tr(q_2)$, hence they both equal $1$, so $\tr(\ell)=2$.
\end{proof}

\begin{lem}
\label{lem:subalgebraE1}
Every subalgebra of $\h_3(\K)$ isomorphic to $\h_2(\K)$ is of the form $E_1(\ell)$ for a unique idempotent $\ell$ with $\tr(\ell)=2$.
\end{lem}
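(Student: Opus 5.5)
Let $A\subset\h_3(\K)$ be a subalgebra isomorphic to $\h_2(\K)$, and let $\ell$ be its unit, regarded as an element of $\h_3(\K)$. By Lemma \ref{lem:unitisline} (with $\L=\K$), $\ell$ is an idempotent with $\tr(\ell)=2$. The plan is to show $A=E_1(\ell)$ by proving an inclusion and then comparing dimensions. Uniqueness of $\ell$ will follow because $\ell$ is recovered from $E_1(\ell)$ as its unit.

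\emph{Inclusion.} Since $\ell$ is the unit of $A$, every $X\in A$ satisfies $\ell\circ X=X$. Hence $X\in E_1(\ell)$ by the definition \eqref{eq:peirce} of the Peirce eigenspace, so $A\subseteq E_1(\ell)$.

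\emph{Dimension count.} By Lemma \ref{lem:E1subalgebra}, $E_1(\ell)\cong\h_2(\K)$. In particular
\[ \dim_\R E_1(\ell)=2+\dim_\R\K=\dim_\R A. \]
The inclusion $A\subseteq E_1(\ell)$ is between real vector spaces of equal finite dimension, so it is an equality: $A=E_1(\ell)$. This settles existence.

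\emph{Uniqueness.} Suppose $A=E_1(\ell')$ for some idempotent $\ell'$ with $\tr(\ell')=2$. Then $\ell'\in E_1(\ell')$, because $\ell'\circ\ell'=\ell'$. By definition of $E_1(\ell')$, $\ell'$ acts as the identity under $\circ$ on $E_1(\ell')=A$, so $\ell'$ is a unit for $A$. A unit of an algebra is unique: if $\ell$ and $\ell'$ are both units, then $\ell=\ell\circ\ell'=\ell'$. Hence $\ell'=\ell$.

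I expect no serious obstacle. The only inputs are Lemma \ref{lem:unitisline}, which identifies the trace of the unit, and Lemma \ref{lem:E1subalgebra}, which gives the dimension of $E_1(\ell)$. The one point to check is that the unit of $A$ is a genuine element of $\h_3(\K)$ satisfying $\ell\circ X=X$ under the ambient Jordan product. This holds because $A$ is a subalgebra, so its product is the restriction of $\circ$.
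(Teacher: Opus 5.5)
Your proof is correct and follows essentially the same route as the paper. You get $A\subseteq E_1(\ell)$ because $\ell$ is the unit of $A$, and equality because both are copies of $\h_2(\K)$; you merely make explicit the dimension comparison $2+\dim_\R\K$, which the paper leaves implicit. Uniqueness then comes, exactly as in the paper, from $\ell=\ell\circ\ell'=\ell'$.
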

\begin{proof}
Let $A\subset\h_3(\K)$ be such a subalgebra, and let $\ell$ be the unit of $A$ regarded as an element of $\h_3(\K)$. Then $\tr(\ell)=2$ by Lemma~\ref{lem:unitisline}, so Lemma~\ref{lem:E1subalgebra} implies that $E_1(\ell)$ is a subalgebra of $\h_3(\K)$ isomorphic to $\h_2(\K)$. Clearly $A\subseteq E_1(\ell)$, and since both sides are isomorphic to $\h_2(\K)$, they must be equal.

To show that $\ell$ is the unique idempotent such that $A=E_1(\ell)$, assume that $A=E_1(\ell')$ for another idempotent $\ell'\in\h_3(\K)$. Since $\ell\in E_1(\ell)=E_1(\ell')$ and $\ell'\in E_1(\ell')=E_1(\ell)$ it follows that
\[\ell=\ell\circ\ell'=\ell'. \qedhere \]
\end{proof}

\begin{lem}
\label{lem:uniqueness}
Every Jordan subalgebra $B \subseteq \h_3(\K)$ isomorphic to $\h_2(\L)$ is contained in a unique Jordan subalgebra isomorphic to $\h_2(\K)$.   If $\ell$ is the unit of $B$ regarded as an element of $\h_3(\K)$, this subalgebra is 
\[  E_1(\ell) = \{ a \in \h_3(\K) : \; \ell \circ a = 1 \}.\]
\end{lem}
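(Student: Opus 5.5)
The plan is to get existence straight from Lemmas~\ref{lem:unitisline} and \ref{lem:E1subalgebra}, and to reduce uniqueness, via Lemma~\ref{lem:subalgebraE1}, to showing that two idempotents of trace $2$ coincide. (The displayed formula for $E_1(\ell)$ in the statement should be read as $\{a : \ell\circ a = a\}$, consistent with Equation~\eqref{eq:peirce}.)

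\emph{Existence.} Let $\ell$ be the unit of $B$, regarded as an element of $\h_3(\K)$. By Lemma~\ref{lem:unitisline}, applied to $B\cong \h_2(\L)$, $\ell$ is an idempotent with $\tr(\ell)=2$. Lemma~\ref{lem:E1subalgebra} then says that $E_1(\ell)$ is a Jordan subalgebra isomorphic to $\h_2(\K)$. Since $\ell$ is the unit of $B$, we have $\ell\circ b=b$ for every $b\in B$. Hence $B\subseteq E_1(\ell)$.

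\emph{Uniqueness.} Suppose $C\subseteq \h_3(\K)$ is any Jordan subalgebra with $C\cong\h_2(\K)$ and $B\subseteq C$. By Lemma~\ref{lem:subalgebraE1}, $C=E_1(\ell')$, where $\ell'$ is the unit of $C$ and $\tr(\ell')=2$. It suffices to show $\ell'=\ell$.

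\begin{enumerate}[(1)]
\item Since $\ell\in B\subseteq C=E_1(\ell')$, we get $\ell'\circ\ell=\ell$.
\item Using commutativity and bilinearity of $\circ$, compute
\[(\ell'-\ell)\circ(\ell'-\ell)=\ell'-2\,\ell'\circ\ell+\ell=\ell'-\ell .\]
So $d=\ell'-\ell$ is an idempotent.
\item Its trace is $\tr(d)=2-2=0$.
\item For any idempotent $d$, $\tr(d)=\tr(d\circ d)=\langle d,d\rangle$. This inner product is positive definite on $\h_3(\K)$, being $\tr(x\circ x)=\sum_{i,j}|x_{ij}|^2$. Therefore $\tr(d)=0$ forces $d=0$, that is, $\ell'=\ell$ and $C=E_1(\ell)$.
\end{enumerate}

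The only step that is not pure bookkeeping is the last one: concluding $\ell=\ell'$ from $\ell\in E_1(\ell')$ together with equal traces. I expect this to be the main obstacle, but the argument via the idempotent $\ell'-\ell$ and positive-definiteness of the trace form should dispose of it directly. No appeal to $\F_4$-transitivity is needed here.
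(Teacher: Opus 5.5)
Your proof is correct, and its skeleton matches the paper's. Existence comes from Lemmas~\ref{lem:unitisline} and \ref{lem:E1subalgebra}, and uniqueness is reduced via Lemma~\ref{lem:subalgebraE1} to showing $\ell'=\ell$. You are also right to read the displayed formula as $\ell\circ a=a$.

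Where you differ is that last step, and your version is the more complete one. The paper just says that ``the same argument as in the proof of Lemma~\ref{lem:subalgebraE1}'' gives $\ell=\ell'$. But that argument used both $\ell\in E_1(\ell')$ and $\ell'\in E_1(\ell)$ to write $\ell=\ell\circ\ell'=\ell'$. Here only $\ell\in E_1(\ell')$, i.e.\ $\ell'\circ\ell=\ell$, is available. That condition alone does not force equality: for instance $\ell=e_1$, $\ell'=e_1+e_2$ satisfy it. So the equality of traces has to enter somewhere, and the paper leaves this implicit.

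Your argument makes it enter cleanly. You show that $\ell'-\ell$ is an idempotent of trace $0$, and it vanishes because $\tr(x\circ x)=\sum_{i,j}|x_{ij}|^2$ is positive definite. That is exactly the missing step, at the cost of one extra line.

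An alternative repair would be to observe that $\ell$ is a trace-$2$ idempotent inside $E_1(\ell')\cong\h_2(\K)$, hence its unit $\ell'$. That route, however, requires checking that this identification preserves traces, which your argument avoids entirely.
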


\begin{proof}
Let $\ell$ be the unit of $B$, regarded as an element of $\h_3(\K)$. Then Lemma~\ref{lem:unitisline} shows that $\ell$ is an idempotent with trace 2, so Lemma~\ref{lem:E1subalgebra} implies that $E_1(\ell)$ is a subalgebra isomorphic to $\h_2(\K)$. Clearly $B\subseteq E_1(\ell)$.

For uniqueness, note that by Lemma~\ref{lem:subalgebraE1} every $\h_2(\K)$-subalgebra of $\h_3(\K)$ is of the form $E_1(\ell')$ for some trace $2$ idempotent $\ell'$. If $B\subseteq E_1(\ell')$, then in particular $\ell\in E_1(\ell')$, and the same argument as in the proof of Lemma~\ref{lem:subalgebraE1} shows that $\ell=\ell'$.
\end{proof}

With these facts in hand, we prove:

\begin{lem}
\label{lem:transitivity_for_pairs}
$\F_4$ acts transitively on pairs of Jordan subalgebras $A,B \subset \h_3(\O)$ with $A \cong \h_2(\O)$, $B \cong \h_3(\C)$ and $A \cap B \cong \h_2(\C)$.
\end{lem}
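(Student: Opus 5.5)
Given a pair $(A,B)$ as in the statement, we will move it to the standard pair of Equation \eqref{eq:choice_of_AB}. By Lemma \ref{lem:transitivity_for_subalgebras} there is $g\in\F_4$ with $gB$ equal to the standard $\h_3(\C)$, so we may assume $B$ is standard. Then $X := A\cap B$ is a subalgebra of $\h_3(\O)$ isomorphic to $\h_2(\C)$. It is also a Jordan subalgebra of $B\cong\h_3(\C)$, since an intersection of subalgebras is a subalgebra. By Lemma \ref{lem:uniqueness}, $A$ is the unique $\h_2(\O)$-subalgebra containing $X$, namely $A=E_1(\ell)$ with $\ell$ the unit of $X$, and by Lemma \ref{lem:unitisline} $\tr(\ell)=2$. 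Thus $A$ is determined by $X$, and it suffices to show that the stabilizer of $B$ in $\F_4$ acts transitively on the $\h_2(\C)$-subalgebras of $B$.

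Apply Lemma \ref{lem:subalgebraE1} with $\K=\C$ inside $B\cong\h_3(\C)$: we get $X=E_1(\ell)\cap B$, where the unit $\ell\in B$ is a trace-2 idempotent of $\h_3(\C)$. Note that the trace of $\ell$ in $B$ agrees with its trace in $\h_3(\O)$ because $B$ is the standard copy. Now $1-\ell$ is a rank-one projection in $\h_3(\C)$, i.e.\ $1-\ell=vv^\dagger$ for a unit vector $v\in\C^3$. Pick $U\in\SU(3)$ with $Uv=(0,0,1)^T$. In the description used in the proof of Lemma \ref{lem:thm:2_special_case}, the element $(U,1)\in(\SU(3)\times\SU(3))/\Z_3\subset\F_4$ acts on $B$ by $Y\mapsto UYU^\dagger$ and preserves $B$. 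It therefore sends $\ell$ to $e_1+e_2$, and hence sends $X$ to $E_1(e_1+e_2)\cap B$, the standard $\h_2(\C)$.

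Since elements of $\F_4$ are Jordan automorphisms, they carry units to units and $E_1(\ell)$ to $E_1(g\ell)$. So the same element sends $A=E_1(\ell)$ to $E_1(e_1+e_2)$, which is the standard $\h_2(\O)$. This composite maps $(A,B)$ to the standard pair, which gives transitivity.

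The main point to check is that the action of $(U,1)$ on $\h_3(\O)$ really restricts to conjugation $Y\mapsto UYU^\dagger$ on the standard $B$ and lies in $\F_4$. Both facts are asserted in the proof of Lemma \ref{lem:thm:2_special_case}, so I would rely on that.
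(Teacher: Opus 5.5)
Your proof is correct and is essentially the paper's argument. You reduce to the standard $B$ by Lemma~\ref{lem:transitivity_for_subalgebras}, move the $\h_2(\C)$-subalgebra $A\cap B$ via its trace-2 unit using the first $\SU(3)$ factor of $\Stab(B)_0$, and let $A$ be carried along by Lemma~\ref{lem:uniqueness}. The differences are cosmetic: you send each pair to the standard pair and use $gE_1(\ell)=E_1(g\ell)$ directly, whereas the paper maps one arbitrary pair to another and concludes $gA=A'$ from the uniqueness in Lemma~\ref{lem:uniqueness}.
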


\begin{proof} 
Let $(A,B), (A',B')$ be pairs as above.  It suffices to find $g \in \F_4$ with $g A = A'$ and $g B = B'$.  But in Lemma \ref{lem:transitivity_for_subalgebras}, we saw that $\F_4$ acts transitively on the set of Jordan subalgebras $B \subset \h_3(\O)$  with  $B \cong \h_3(\C)$.   Given this we can assume without loss of generality that $B= B'$ is the standard copy of $\h_3(\C)$ given as in Equation \eqref{eq:choice_of_AB}.  Henceforth we call this copy simply $\h_3(\C)$.  It then suffices to find $g \in \Stab(\h_3(\C))$ with $g A = A'$.

By assumption, $A \cap \h_3(\C)$ and $A' \cap \h_3(\C)$ are Jordan subalgebras isomorphic to $\h_2(\C)$.    $\SU(3)$ acts on $\h_3(\C)$ by
\[          g \maps X \to g X g^\dagger  \]
and it acts transitively on trace $2$ idempotents in $\h_3(\C)$, which correspond to two-dimensional subspaces of $\C^3$. Therefore, by Lemma~\ref{lem:subalgebraE1}, it also acts transitively on the set of subalgebras of $\h_3(\C)$ isomorphic to $\h_2(\C)$. Thus, we can find $g_0 \in \SU(3)$ moving $A \cap \h_3(\C) $ to $A' \cap \h_3(\C)$. 

In the proof of Lemma \ref{lem:thm:2_special_case} we saw $\Stab_0(\h_3(\C)) = (\SU(3) \times \SU(3))/\Z_3$, and  $g = [g_0,1] \in \Stab_0(\h_3(\C))$ acts on $\h_3(\C)$ just as $g_0$ does.   Thus $g$ stabilizes $\h_3(\C)$ and moves $A \cap \h_3(\C)$ to $A' \cap \h_3(\C)$.   

To conclude, we just need to show $gA = A'$.   For this we use Lemma \ref{lem:uniqueness}, which says that any Jordan subalgebra of $\h_3(\O)$ isomorphic to $\h_2(\C)$ is contained in a unique subalgebra isomorphic to $\h_2(\O)$.  $gA$ is a Jordan algebra isomorphic to $\h_2(\O)$ that contains  $g(A \cap \h_3(\C)) = A' \cap \h_3(\C)  \cong \h_2(\C)$, but $A'$ is another subalgebra with this property, so we must have $g A = A'$.   
\end{proof}

The proof of Theorem \ref{thm:2} is now immediate.

\addtocounter{thm}{-10}%
\begin{thm}[\textup{\textbf{restated}}]
Suppose $A,B$ are Jordan subalgebras of $\h_3(\O)$ such that
\[  A \cong \h_2(\O), \;\; B \cong \h_3(\C), \;\; A \cap B \cong \h_2(\C) . \]
Then 
\[     \Stab(A) \cap \Stab(B)_0 \cong \S(\U(2) \times \U(3)) .\]
\end{thm}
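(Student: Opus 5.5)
The plan is to reduce to the special case in Lemma \ref{lem:thm:2_special_case} using the transitivity result of Lemma \ref{lem:transitivity_for_pairs}.

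Let $(A,B)$ be any pair satisfying the hypotheses, and let $(A_s,B_s)$ denote the standard pair from Equation \eqref{eq:choice_of_AB}. The standard pair satisfies the hypotheses as well, since $A_s \cap B_s \cong \h_2(\C)$, as computed just after Equation \eqref{eq:choice_of_AB}. So Lemma \ref{lem:transitivity_for_pairs} supplies some $g \in \F_4$ with $gA = A_s$ and $gB = B_s$.

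Next I will check that conjugation by $g$ transports the relevant stabilizers.
\begin{itemize}
\item Since $hA=A$ holds if and only if $(ghg^{-1})(gA)=gA$, we get $g\,\Stab(A)\,g^{-1} = \Stab(gA) = \Stab(A_s)$, and likewise $g\,\Stab(B)\,g^{-1} = \Stab(B_s)$.
\item Conjugation by $g$ is a homeomorphic group automorphism of $\F_4$. Hence it carries identity components to identity components, so $g\,\Stab(B)_0\,g^{-1} = \Stab(B_s)_0$.
\item Conjugation is a bijection, so it commutes with intersections.
\end{itemize}
Together these give
\[ g\big(\Stab(A)\cap\Stab(B)_0\big)g^{-1} = \Stab(A_s)\cap\Stab(B_s)_0 . \]

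Finally, Lemma \ref{lem:thm:2_special_case} identifies the right-hand side with $\S(\U(2)\times\U(3))$. Since conjugation is an isomorphism of Lie groups, the left-hand group is isomorphic to it as well, which proves the theorem.

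There is no real obstacle here, since all the work sits in Lemma \ref{lem:transitivity_for_pairs} and Lemma \ref{lem:thm:2_special_case}. The only point needing care is the identity-component step, which holds because inner automorphisms preserve connectedness and fix the identity element.
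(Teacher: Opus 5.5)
Your proposal is correct and is the paper's argument: reduce to the standard pair of Lemma~\ref{lem:thm:2_special_case} using the transitivity result of Lemma~\ref{lem:transitivity_for_pairs}. The one addition is that you spell out the routine check that conjugation by $g$ carries $\Stab(A)\cap\Stab(B)_0$ onto $\Stab(A_s)\cap\Stab(B_s)_0$, which the paper leaves implicit.
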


\addtocounter{thm}{5}%
\begin{proof}
Lemma \ref{lem:thm:2_special_case} says that this theorem holds in the special case where $A$ and $B$ are given as in Equation \eqref{eq:choice_of_AB}.  It thus suffices to show that we can transform any case to this special case using the action of $\F_4$.  This is the content of Lemma \ref{lem:transitivity_for_pairs}.  
\end{proof}

The proof of Theorem \ref{thm:2} contains within it the seeds of this result, which has simpler hypotheses:

\addtocounter{thm}{-7}%
\begin{thm}[\textup{\textbf{restated}}]
Suppose $X,B$ are Jordan subalgebras of $\h_3(\O)$ such that
\[  X \cong \h_2(\C), \;\; B \cong \h_3(\C), \;\; X \subset B . \]
Then 
\[     \Stab(X) \cap \Stab(B)_0 \cong \S(\U(2) \times \U(3)). \]
\end{thm}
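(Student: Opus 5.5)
Given $X \subset B$ with $X \cong \h_2(\C)$ and $B \cong \h_3(\C)$, we reduce to Theorem \ref{thm:2}. Let $\ell$ be the unit of $X$, regarded as an element of $\h_3(\O)$, and set $A = E_1(\ell)$. By Lemma \ref{lem:uniqueness}, $A$ is the unique Jordan subalgebra of $\h_3(\O)$ isomorphic to $\h_2(\O)$ that contains $X$. The plan has two steps. First show that $A \cap B = X$, so that $(A,B)$ satisfies the hypotheses of Theorem \ref{thm:2}. Then show that
\[ \Stab(X) \cap \Stab(B)_0 = \Stab(A) \cap \Stab(B)_0 , \]
after which Theorem \ref{thm:2} gives the result.

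\emph{Step 1: $A \cap B = X$.} Since $\ell \in X \subset B$, $\ell$ is an idempotent of $B$. By Lemma \ref{lem:unitisline} (with $\K = \L = \C$), its trace in $B$ is $2$; the trace of an idempotent is its rank, which agrees in $B$ and in $\h_3(\O)$, for instance after moving $B$ to the standard $\h_3(\C)$ via Lemma \ref{lem:transitivity_for_subalgebras}. Peirce spaces are compatible with subalgebras, so
\[ A \cap B = E_1(\ell) \cap B = \{ b \in B : \ell \circ b = b \}, \]
and this is the Peirce $1$-space of $\ell$ computed inside $B \cong \h_3(\C)$. By Lemma \ref{lem:E1subalgebra} applied with $\K = \C$, it is isomorphic to $\h_2(\C)$. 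It contains $X$, and both are $4$-dimensional, so $A \cap B = X$.

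\emph{Step 2: equality of stabilizers.} If $g \in \Stab(A) \cap \Stab(B)_0$, then $g$ preserves $A \cap B = X$, giving one inclusion. Conversely, let $g \in \Stab(X)$. Then $g$ is an automorphism of $\h_3(\O)$ that restricts to an automorphism of $X$, so it fixes the unit $\ell$ of $X$. Hence $g$ preserves $E_1(\ell) = A$; alternatively, $gA$ is an $\h_2(\O)$-subalgebra containing $gX = X$, and the uniqueness in Lemma \ref{lem:uniqueness} gives $gA = A$. So $\Stab(X) \subseteq \Stab(A)$, and intersecting with $\Stab(B)_0$ gives the reverse inclusion.

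Combining the two steps with Theorem \ref{thm:2},
\[ \Stab(X) \cap \Stab(B)_0 = \Stab(A) \cap \Stab(B)_0 \cong \S(\U(2) \times \U(3)). \]

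I expect Step 1 to be the main obstacle: it needs the trace of $\ell$ in $B$ to be $2$ and the Peirce spaces to restrict correctly from $\h_3(\O)$ to $B$. If the rank comparison causes trouble, I would instead use Lemma \ref{lem:transitivity_for_subalgebras} to assume $B$ is the standard $\h_3(\C)$ and compute $E_1(\ell) \cap \h_3(\C)$ directly. There, $\SU(3)$ moves $\ell$ to $e_1 + e_2$, and the intersection is then visibly the upper-left $\h_2(\C)$ block.
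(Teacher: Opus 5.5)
Your proposal is correct and follows the paper's proof essentially step for step. You set $A = E_1(\ell)$, show $A \cap B = X$ by applying Lemmas \ref{lem:unitisline} and \ref{lem:E1subalgebra} inside $B \cong \h_3(\C)$ together with a dimension count, obtain equality of the two stabilizer intersections from the uniqueness of $A$ and from $X = A \cap B$, and then invoke Theorem \ref{thm:2}. The rank comparison between $B$ and $\h_3(\O)$ that you flag as a possible obstacle is not actually needed: Lemma \ref{lem:E1subalgebra} only uses the trace of $\ell$ computed in $B$, and Lemma \ref{lem:unitisline} supplies that directly.
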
  
\addtocounter{thm}{6}%

\begin{proof}
Suppose $X$ and $B$ have the stated properties.  By Lemma \ref{lem:uniqueness}, $X \cong \h_2(\C)$ is contained in a unique Jordan subalgebra $A \cong \h_2(\O)$ of $\h_3(\O)$, namely $A = E_1(\ell)$ where $\ell\in\h_3(\O)$ is the unit of $X$.

Let $f\maps\h_3(\C)\to B$ be an algebra isomorphism. Then $f^{-1}(\ell)$ is the unit of the subalgebra $f^{-1}(X)\cong\h_2(\C)$, so it follows from Lemma~\ref{lem:unitisline} that $\tr(f^{-1}(\ell))=2$. Therefore, by Lemma~\ref{lem:E1subalgebra}, $E_1(f^{-1}(\ell))\subseteq\h_3(\C)$ is a subalgebra isomorphic to $\h_2(\C)$. But so is $f^{-1}(X)\subseteq E_1(f^{-1}(\ell))$, so they are equal. Thus
\[X=f(E_1(f^{-1}(\ell)))=E_1(\ell)\cap B=A\cap B.\]
It follows that $A$ and $B$ obey the hypotheses of Theorem \ref{thm:2}, so 
\[   \Stab(A) \cap \Stab(B)_0 \cong \S(\U(2) \times \U(3)). \]
Since $X$ uniquely determines $A$ by Lemma \ref{lem:uniqueness}, we have
\[   \Stab(X) \cap \Stab(B)_0 \subseteq \Stab(A) \cap \Stab(B)_0 .\]
On the other hand, since $A$ and $B$ determine $X = A \cap B$, we have
\[   \Stab(X) \cap \Stab(B)_0 \supseteq \Stab(A) \cap \Stab(B)_0 .\]
Thus 
\[  \Stab(X) \cap \Stab(B)_0  = \Stab(A) \cap \Stab(B)_0 \cong \S(\U(2) \times \U(3)). \qedhere \]
\end{proof}

\subsection*{Acknowledgements}

We thank David Madore for help in coming up with Lemmas \ref{lem:uniqueness} and \ref{lem:transitivity_for_pairs}.

\end{document}